\newif\ifJSC
\newtheorem{definition}{Definition}
\newtheorem{example}{Example}
\newtheorem{remark}{Remark}
\newtheorem{hypothesis}{Hypothesis}
\newtheorem{proposition}{Proposition}
\journal{J.~Symbolic Computation}
\newcommand{\lexl}{lex-least\xspace}
\newcommand{\Lexl}{Lex-least\xspace}
\def\semival{semi-valuation\xspace}
\def\seminu{\nu^{\prime}}
\newcommand{\ProjBM}{P_{BM}} 
\newcommand{\LiftBM}{L_{BM}} 
\def\bottom{\perp}
\def\CC{\mathbb{C}}
\def\NN{\mathbb{N}}
\def\RR{\mathbb{R}}
\def\ZZ{\mathbb{Z}}
\def\coeff{\mathop{\rm coeff}\nolimits}
\def\cont{\mathop{\rm cont}\nolimits}
\def\disc{\mathop{\rm disc}\nolimits}
\newcommand\ldcf{\operatorname{\rm lc}}
\newcommand\trcf{\operatorname{\rm tc}}
\def\lc{\mathop{\rm lc}\nolimits}
\def\res{\mathop{\rm res}\nolimits}
\def\tc{\mathop{\rm tc}\nolimits}
\newcommand\PL{\operatorname{\rm PL}}
\newcommand{\Test}{\mathbf{T}}
\begin{document}




\title{Lazard-style CAD and Equational Constraints}


\author{James H. Davenport}
\affiliation{%
  \institution{University of Bath}
  \country{United Kingdom}
}
\orcid{0000-0002-3982-7545}
\email{masjhd@bath.ac.uk}

\author{Akshar Nair}
\affiliation{%
  \institution{University of Bath}
  \country{United Kingdom}
}
\orcid{0000-0001-7379-1868}
\email{akshar.nair@gmail.com}

\author{Gregory Sankaran}
\affiliation{%
  \institution{University of Bath}
  \country{United Kingdom}
}
\orcid{0000-0002-5846-6490}
\email{g.k.sankaran@bath.ac.uk}

\author{Ali K. Uncu}
\affiliation{%
  \institution{RICAM, Austrian Academy of Sciences\\ \& University of Bath}
  \country{Austria \& United Kingdom}
}
\orcid{0000-0001-5631-6424}
\email{akuncu@ricam.oeaw.ac.at}

\thanks{We acknowledge UKRI EPSRC for their constant support. The second author's thesis was supported by EPSRC grant EP/N509589/1. The first and the last authors are partially supported by EPSRC grant EP/T015713/1. The last author also thanks the partial support of Austrian Science Fund FWF grant P34501-N.
We are grateful to Chris Brown for explanations of \cite{BrownMcCallum2020a}.}
\bibliographystyle{ACM-Reference-Format}

\begin{abstract}
McCallum-style Cylindrical Algebra Decomposition (CAD) is a major improvement on the original Collins version, and has had many subsequent advances, notably for total or partial equational constraints. But it suffers from a problem with nullification. The recently-justified Lazard-style CAD does not have this problem. However, transporting the equational constraints work to Lazard-style does reintroduce nullification issues. This paper explains the problem, and the solutions to it, based on the second author's Ph.D. thesis and the Brown--McCallum improvement to Lazard.
\par
With a single equational constraint, we can gain the same improvements in Lazard-style as in McCallum-style CAD . Moreover, our approach does not fail where McCallum would due to nullification. Unsurprisingly, it does not achieve the same level of improvement as it does in the non-nullified cases. We also consider the case of multiple equational constraints.
\end{abstract}

\begin{CCSXML}
<ccs2012>
   <concept>
       <concept_id>10010147.10010148</concept_id>
       <concept_desc>Computing methodologies~Symbolic and algebraic manipulation</concept_desc>
       <concept_significance>500</concept_significance>
       </concept>
   <concept>
       <concept_id>10002950.10003624.10003625</concept_id>
       <concept_desc>Mathematics of computing~Combinatorics</concept_desc>
       <concept_significance>500</concept_significance>
       </concept>
 </ccs2012>
\end{CCSXML}

\ccsdesc[500]{Computing methodologies~Symbolic and algebraic manipulation}

\keywords{Cylindrical algebraic decomposition, Lazard projection and lifting, Equational constraints} 




\maketitle
\section{Introduction}
\emph{Cylindrical Algebraic Decomposition} (CAD) was introduced by Collins in 1975 \cite{Collins1975}.
There have been many improvements since \cite{Collins1975} introduced the projection-lifting paradigm. McCallum \cite{McCallum1985b} made a major one, but this had the drawback that it could not be applied if a polynomial nullified. There have been many developments to \cite{McCallum1985b}, often concerned with \emph{equational constraints}, i.e. when the semi-algebraic set lies in a proper sub-variety. \cite{McCallumetal2019a} justified Lazard's idea in \cite{Lazard1994}, using different forms of projection and lifting to avoid the nullification problem.

This paper is a first step in transplanting the improvements to \cite{McCallum1985b} to the Lazard setting. More specifically, we base ourselves on the Brown--McCallum \cite{BrownMcCallum2020a} version of the Lazard approach.
\par
Previous papers \cite{Nairetal2019b,Nairetal2020a} were based on \cite{McCallumetal2019a}.
The second author's thesis \cite{Nair2021b} was largely based on \cite{McCallumetal2019a}, with some updates for \cite{BrownMcCallum2020a}.
Here we integrate \cite{BrownMcCallum2020a} throughout.

\section{Background and Notation}
\subsection{CAD by Projection-Lifting}
There are various ways of computing cylindrical algebraic decompositions, e.g. via Triangular Decomposition \cite{Chenetal2009d}, but this paper is cast in the context of the Projection-Lifting paradigm introduced by Collins \cite{Collins1975}: see Figure \ref{fig:PL}. Let $S\subset\ZZ[x_1,\ldots,x_n]$ be the set of polynomials of interest, and $\phi(S)$ the property we wish our decomposition to have. Generally $\phi(S)$ will be a variant of ``each cell is sign-invariant for the polynomials in $S$'', as it is that property that enables quantifier elimination, Collins' original motivation. 
\begin{figure}[h]
$$
\begin{array}{clclc}
\RR^n&S&&D_n&\phi_X(S)\cr
\strut&\downarrow P_X&&\uparrow L_X\cr
\RR^{n-1}&P_X(S)&&D_{n-1}&\phi_X(P_X(S))\cr
&\downarrow P_X&&\uparrow L_X\cr
\vdots&\vdots&&\vdots&\vdots\cr
\RR^2&P_X^{n-2}(S)&&D_2&\phi_X(P_X^{n-2}(S))\cr
&\downarrow P_X&&\uparrow L_X\cr
\RR^1&P_X^{n-1}(S)&\rightarrow\hbox{isolate roots}\rightarrow&D_1&\phi_X(P_X^{n-1}(S))
\end{array}
$$
where $X$ indicates the precise variant of Projection/Lifting, and $D_i$ is the intermediate decomposition of $\RR^i$.
\caption{Projection/Lifting Paradigm\label{fig:PL}}
	\Description{The Projection/Lifting Paradigm}
\end{figure}

The original Collins algorithm for Cylindrical Algebraic Decomposition \cite{Collins1975} have an expensive (compared to its successors) projection operator $P_C$, and $\phi_C$ is indeed sign-invariance. The lifting operation $L_C$ is simple: above each cell $C\in D_i$ there is a cylinder $C\times\RR\subset\RR^{i+1}$ which is sliced by the real branches of $p_k=0$ for the polynomials $p_k\in P^{n-i-1}(S)$ and $P_C$ ensures that these branches are delineable over $C$, i.e. do not intersect and are simple surfaces above the whole of $C$. The relevant cells of $D_{i+1}$ are then each branch, known as ``sections'', and the ``sectors'' between adjacent branches, including the sector ``$-\infty<x_{i+1}<$ all sections`` and the sector ``all sections $<x_{i+1}<\infty$''.

\subsection{Curtains}
Most improvements to Collins' algorithm can encounter difficulties when some polynomials nullify, i.e. vanish over some subset of $\RR^k=\langle x_1,\ldots,x_k\rangle$.
The term curtain for the varieties where polynomials nullify was introduced in~\cite{Nairetal2020a}:
\def\foo{\cite[Definition 43]{Nair2021b}}
\begin{definition}[\foo]
        A variety $C\subseteq\RR^n$ is called a curtain if, whenever
        $(x,x_n)\in C$, then  $(x,y)\in C$ for all
        $y\in\RR$.
\end{definition}

$C$ is a curtain if it is a union of fibres of $\RR^n \to \RR^{n-1}$.
\def\foo{\cite[Definition 44]{Nair2021b}}
\begin{definition}[\foo]
        Suppose $f\in\RR[x_1,\ldots,x_n]$ and $S\subseteq\RR^{n-1}$. We say
	that $V_f$ (or $f$) has a \emph{curtain} at $S$ if for all $(\alpha_1,\ldots,\alpha_{n-1})\in S$,
	$y\in \RR$ we have $f(\alpha_1,\ldots,\alpha_{n-1},y)=0$. We call $S$ the \emph{foot} of the curtain.
	If the foot $S$ is a singleton, we call the curtain a \emph{point-curtain}.
\end{definition}

\def\foo{\textsc{cf} \cite[Definition 45]{Nair2021b}}
\begin{definition}[\foo]\label{def:implicit_explicit}
Suppose the polynomial $f\in\RR[x_1,\ldots,\allowbreak x_n]$ factorises as $f=gh$, where $g\in\RR[x_1,\ldots,x_{n-1}]$ and $g(\alpha_1,\ldots,\allowbreak \alpha_{n-1})=0$. Then the variety of $f$ is said to contain an {explicit} curtain whose foot is the zero set of $g$. 
	A curtain which does not contain (set-theoretically) an explicit curtain is said to be {implicit}, and a curtain which contains an explicit curtain but is not an explicit curtain (because $h$ itself has curtains) is said to be {mixed}. 
\end{definition}

\begin{example}[Types of Curtain]\label{eg:EC_IMC} ${ }$

	\begin{itemize}
		\item Explicit Curtain: $f(x,y,z)=xy^2-y^2-xz+z=(x-1)(y^2-z)$, curtain at $(1,0)$. The curtain can be seen in Figure \ref{fig:cur_EC} as the sheet given by $x=1$.

		\item Implicit Curtain: $f(x,y,z)=x^2+yz$, curtain at $(0,y,0)$.
The blue line represents this curtain in Figure~\ref{fig:cur_IC}.
 
	\end{itemize}
\end{example}

\begin{figure}[h!]
	\vskip-12pt 
	\begin{subfigure}{0.2\textwidth}
		\centering
		\includegraphics[scale=0.17]{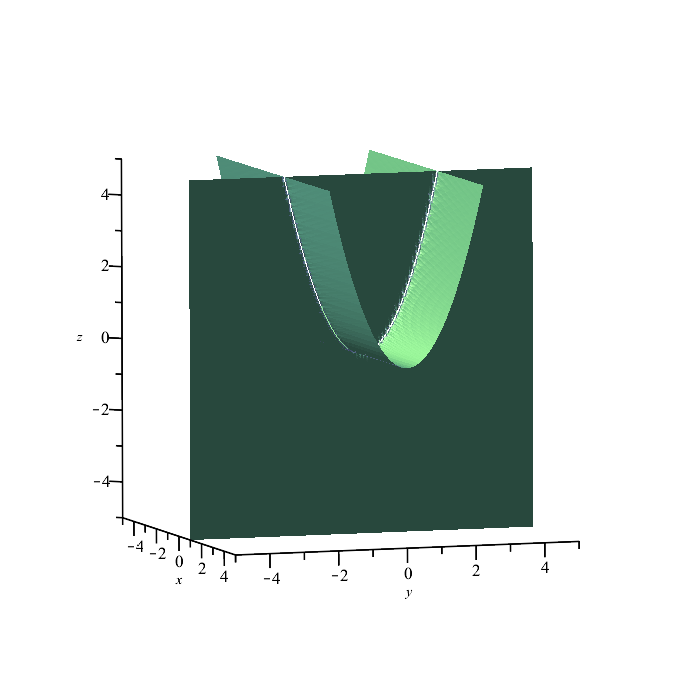}
		\caption{Surface with an Explicit Curtain}
		\label{fig:cur_EC}
	\end{subfigure}
	\hfill
	\begin{subfigure}{0.2\textwidth}
		\includegraphics[scale=0.17]{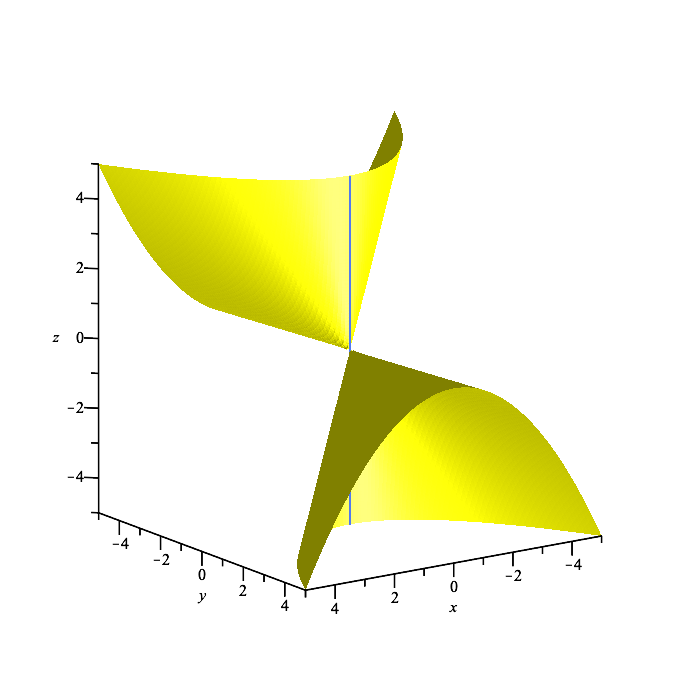}
		\caption{Surface with an Implicit Curtain}
		\label{fig:cur_IC}
	\end{subfigure}
	\vskip-12pt 
	\caption{Different types of curtains }
	\Description{The images corresponding to the two examples of curtains}
	\label{fig:three graphs}
\end{figure}

\subsection{McCallum's Improvement to Collins}
\def\foo{\cite[p. 46]{McCallum1985b}}
\begin{definition}[\foo]\label{def:McC}
Let $A$ be a set of polynomials in $\ZZ[x_1,\ldots,\allowbreak x_n]$, $\cont(A)$ the contents of the elements of $A$ with respect to $x_n$, and $B$ be a square-free basis\footnote{\cite{McCallum1985b} uses the finest square-free basis, but this is for reasons of efficiency.} for $A$. Let $\coeff(B)$ be the set of all coefficients of the members of $B$ (viewed as polynomials in $x_n$), $\disc(B)$ be the set of discriminants of elements of $B$ (with respect to $x_n$) and $\res(B)$ the set of all resultants  (with respect to $x_n$) of pairs of elements of $B$. McCallum's projection operator $P_M$ is defined as
\begin{equation}\label{eq:PM}
	P_M(A):=\cont(A)\cup\coeff(B)\cup\disc(B)\cup\res(B).
\end{equation}
\end{definition}
Using this definition, McCallum~\cite{McCallum1985b} improves on the  Collins' original~\cite{Collins1975}. The projection $P_M$ is distinctly smaller than $P_C$, and the corresponding $\phi_M$ is order-invariance, i.e. if a polynomial vanishes on a cell $C\in D_i$ it does so to constant order on that cell. The corresponding lifting operator $L_M$ is identical to $L_C$, except that, if a polynomial $q$ vanishes on a zero-dimensional cell, it is replaced by the corresponding \emph{delineating polynomial\/} \cite[Definition 1]{McCallum1998}. However, $L_M$ may fail (giving an error message, not the wrong result) if relevant polynomials nullify (are identically zero) over a cell of $\RR^k$ of positive dimension.

\ifJSC
The reader may ask ``why doesn't $L_C$ fail similarly, if it's essentially identical to $L_M$?''.  The answer is that $P_C\supset P_M$, and has ``backup polynomials'', so if an element of, say, $\res(B)$ nullifies, so we know nothing about the intersections of the surfaces $f=0$, $g=0$ that $\res(f,g)$ is meant to describe, $P_C$ also contains subresultant coefficients from the computation of $\res(f,g)$, and one of these will not nullify and will convey the appropriate information.
\fi
\subsection{Brown's Improvement to McCallum}
Brown~\cite{Brown2001b} introduced a useful improvement on  \cite{McCallum1985b}.
\begin{definition}\label{def:PB}
We use the same notation as Definition \ref{def:McC}, and  $\lc$ for the leading coefficient. Brown's projection is defined as 
\begin{equation}\label{eq:PB}
	P_B(A):=\cont(A)\cup\lc(B)\cup\disc(B)\cup\res(B).
\end{equation}
\end{definition}
The corresponding lifting $L_B$ has several (necessary) enhancements over $L_M$ but the details are largely irrelevant here: the relevant one is that, to the CAD we are building over, we add those points above which some projection factor nullifies (if it nullifies on a set of positive dimension, then the system is not well-oriented). For any non-trivial set $A$ of polynomials, $P_B(A)\subsetneq P_M(A)$, the difference being the non-leading coefficients of $B$. This does not affect the $\mathcal{O}$-asymptotics, which are driven by the $\res(B)$ term, but is very effective in practice. \cite[Tables 1,2]{Brown2001b}  consider an example $A$ in six variables, where $|P_B^4(A)|=18$, while $|P_M^4(A)|=129$.  $|P_B^5(A)|=141$, while  $P_M^5(A)$ could not be computed, but probably  $|P_M^5(A)|>8000$.

\subsection{Lex-least valuation and Lazard lifting}

Lazard~\cite{Lazard1994} introduced a novel approach to 
CAD to bypass nullification, which was later made rigorous by \cite{McCallumetal2019a}. This section contains the relevant definitions, as used in \cite{Nair2021b,Tonks2021a}.

\begin{definition}
Let $v,w\in\ZZ^{n}$. We say that $v=(v_1,\ldots,v_n)\geq_{lex}(w_1,\ldots,w_n)=w$
if and only if either $v=w$ or there exists an $i \leq n$ such that
$v_i>w_i$ and $v_k=w_k$ for all $k$ in the range $1 \leq k < i$.
\end{definition}
\def\foo{\cite[Definition 2.4]{McCallumetal2019a}}
\begin{definition}[\foo]\label{def:lexleast}
        Let $n\geq 1$ and suppose that $f\in \RR [x_1,\ldots,x_n]$ is non-zero
        and $\alpha=(\alpha_1 , \ldots, \alpha_n )\in \RR^n$. The \lexl
        valuation $\nu_\alpha (f)$ at $\alpha$ is the least (with respect to
        $\geq_{lex}$) element $v=(v_1,\ldots,v_n) \in \NN^n$ such that
        $f$ expanded about $\alpha$ has the term
        \[
        c(x_1-\alpha_1)^{v_1} \cdots (x_n-\alpha_n)^{v_n},
        \]
        where $c\neq 0$.
\end{definition}

Note that $\nu_\alpha(f)=(0,\ldots ,0)$ if and only if
$f(\alpha)\neq0$. We should note that the terminology we are using is slightly different than the previous ones. For example, \Lexl valuation is referred to as the Lazard
valuation in \cite{McCallumetal2019a}. We do it in an effort to be precise.

\Lexl valuation of $f$ at $\alpha\in\RR^n$ is calculated by first picking an order of variables $x_1\prec x_2\prec \dots\prec x_n$, then successively substituting in $\alpha_i$ in the given variable order. (The order of projections start with $x_n$ first, the order of valuation calculations starst with $x_1$ first.)  At each step of the successive substitutions $f(\alpha_1,\dots, \alpha_{i-1},x_i,\dots, x_n)$ we identify the largest positive integer $v_i$ such that \[(x_i-\alpha_i)^{v_i}\mid f(\alpha_1,\dots, \alpha_{i-1},x_i,\dots, x_n).\]  The $v_i$ is the component of the \Lexl valuation of $f$ at $\alpha$ that corresponds to $\alpha_i$. We continue the valuation calculations by first replacing $f(\alpha_1,\dots, \alpha_{i-1},x_i,\dots, x_n)$ with $f(\alpha_1,\dots, \alpha_{i-1},x_i,\dots, x_n)\times(x-\alpha_i)^{-v_i}$ and then substituting $\alpha_i$ in this new function.

\begin{example}
        If $n=1$ and $f(x_1)=x_1^3-2x_1^2+x_1= x_1(x_1-1)^2$, then $\nu_0(f)=1$,
        $\nu_1(f)=2$, and $\nu_x(f)=0$, for all $x\not= 0,1$. 
        Similarly, if $n=2$ (with the standard order $x_1\prec x_2$) and $f(x_1,x_2)=x_1(x_2-1)^2$, then
        $\nu_{(0,0)}(f)=(1,0)$, $\nu_{(2,1)}(f)=(0,2)$,
        $\nu_{(0,1)}(f)=(1,2)$, and $(0,0)$ otherwise.
\end{example}

The lex-least valuation is strongly dependent on the order of the variables, as the following example illustrates.
\begin{example}
        Let $f(x,y,z,w)=x^2+y^2z-2yz^2+zw$, $\alpha_1=(0,1,0,1)$ and $\alpha_2=(0,0,1,0)$. With respect to the ordering $x\prec y\prec z\prec w$ we get $\nu_{\alpha_1}(f)=(0,0,1,0)$ and $\nu_{\alpha_2}(f)=(0,0,0,1)$. With respect to the ordering $x\prec z\prec y \prec w$ we get $\nu_{\alpha_1}(f)=(0,1,0,0)$ and $\nu_{\alpha_2}(f)=(0,0,0,1)$. 
\end{example}
Note that in the case of $\alpha_2$, the ordering of variables does not change the valuation unlike the case for $\alpha_1$. Ordering of variables is essential and must be fixed when comparing valuations of points.
\def\foo{\cite[proposition 3.1]{McCallumetal2019a}}
\begin{proposition}[\foo]\label{prop:valuation}
         $\nu_\alpha$ is a valuation: that is, if $f$ and $g$
        are non-zero elements of $\RR[x_1,\ldots,x_n]$ and $\alpha\in \RR^n$,
        then $$\nu_\alpha(fg)=\nu_\alpha(f)+\nu_\alpha(g) \text{ and }
        \nu_\alpha(f+g)\geq_{lex}\min\{\nu_\alpha(f),\ \nu_\alpha(g)\}.$$
\end{proposition}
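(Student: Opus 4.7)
The plan is to translate coordinates so that $\alpha = 0$, which turns the proposition into a statement about supports of polynomials: writing $f = \sum_{u} c_u x^u$ with $u \in \NN^n$, the valuation $\nu_0(f)$ is the lex-least element of $S_f := \{u : c_u \neq 0\}$. Before tackling either assertion, I would record the auxiliary fact that $\geq_{lex}$ on $\NN^n$ is compatible with addition, i.e.\ $a \geq_{lex} b$ implies $a+c \geq_{lex} b+c$; this is a one-line consequence of the definition, since the first coordinate at which $a+c$ and $b+c$ disagree is the same as for $a$ and $b$.

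For the subadditive inequality $\nu_0(f+g) \geq_{lex} \min\{\nu_0(f), \nu_0(g)\}$, the argument is immediate: any multi-index strictly below both $\nu_0(f)$ and $\nu_0(g)$ has coefficient zero in both $f$ and $g$, hence in $f+g$, so it cannot lie in $S_{f+g}$.

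The substantive work is multiplicativity. Setting $u^* = \nu_0(f)$ and $w^* = \nu_0(g)$, I would prove both inequalities. For $\nu_0(fg) \geq_{lex} u^* + w^*$, any $v \in S_{fg}$ has the form $u+w$ with $u \in S_f$ and $w \in S_g$, and two applications of additive compatibility yield $v \geq_{lex} u^* + w^*$. For the reverse direction, I would show that the coefficient of $x^{u^* + w^*}$ in $fg$ equals $c_{u^*} d_{w^*}$ and is therefore non-zero. This no-cancellation claim is the main obstacle: if another pair $(u,w) \neq (u^*, w^*)$ with $u+w = u^* + w^*$ and $c_u d_w \neq 0$ also contributed, then $u >_{lex} u^*$ at some first differing index $i$, and the equation $u + w = u^* + w^*$ forces $w$ to agree with $w^*$ on coordinates below $i$ while satisfying $w_i < w^*_i$, contradicting $w \geq_{lex} w^*$. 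This ``pincer'' step is the only place in the proof where the definition of $\geq_{lex}$ is used non-trivially, and it is the one I would write out most carefully.
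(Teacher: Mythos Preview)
Your proof is correct. The paper itself does not prove this proposition; it is simply quoted from \cite[Proposition~3.1]{McCallumetal2019a}, so there is no in-paper argument to compare against. Your direct approach via supports after translating to $\alpha=0$ is the standard one, and the ``pincer'' step you single out is exactly the crux: it is the verification that $\geq_{lex}$ is a monomial order, so the lex-least term of a product is the product of the lex-least terms with no cancellation. One small omission: the additive inequality tacitly assumes $f+g\neq 0$; you should either note the convention $\nu_\alpha(0)=\infty$ or restrict to $f+g\neq 0$ explicitly.
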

\def\foo{After \cite[Definition 2.6]{McCallumetal2019a}}
\begin{definition}[\foo]\label{def:residue}
	Let $n\geq 2$, and suppose that $f\in \RR[x_1,\ldots,x_n]$ is non-zero and that $\beta\in \RR^{n-1}$. The \lexl semi-valuation of $f$ on/above $\beta$, $\seminu_\beta(f)=(\nu_1,\ldots,\nu_{n-1})$, is defined as the vector of $n-1$ non-negative integers consisting of the valuation outcomes of $f$ at $\beta$, regarded as an element of $K[x_1,\ldots,x_{n-1}]$ where $K=\RR[x_n]$. 
	So $f=f_\beta(x_1-\beta_1)_{\nu_1}\cdots(x_{n-1}-\beta_{n-1})^{\nu_{n-1}}$, and we call $f_\beta$ the Lazard residue of $f$.
\end{definition}

The \lexl semi-valuation of $f$ is called the \textit{Lazard evaluation} of $f$ above $\beta$, denoted by $\nu_\beta(f)\ni\NN^{n-1}$ in~\cite{McCallumetal2019a}. The Lazard residue $f_\beta\in\RR[x_n]$ is called the \textit{Lazard evaluation} of $f$ at $\beta$ in~\cite{McCallumetal2019a}.

Note that~\cite{McCallumetal2019a} uses $\nu$ for two functions. When $\nu$ is used as a valuation \underline{at}, it is treated as a function from $\RR^{n}\times\RR[x_1,\ldots,x_n]\rightarrow\NN^n$. When $\nu$ is used as a valuation \underline{on}/\underline{above}, it is treated as a function from $\RR^{n-1}\times\RR[x_1,\ldots,x_n]\rightarrow\NN^{n-1}$. 
We felt that it is necessary to keep these two functions separate. The older notation also has the potential to cause confusion when reading $\nu_\alpha(f)$. The reader needs to check whether $\alpha\in\RR^{n}$ or $\alpha\in\RR^{n-1}$, or whether $\nu_\alpha(f)\in\NN^{n}$ or $\nu_\alpha(f)\in\NN^{n-1}$.

\ifJSC\begin{algorithm}\caption{Lazard residue}\label{Lazard_residue_alg}
        {\bf Input}: $f\in\RR[x_1,\ldots,x_n]$ and $\beta\in\RR^{n-1}$.\\
        {\bf Output}: Lazard residue $f_\beta$ and \lexl valuation of $f$ above $\beta$.

        \begin{algorithmic}[1]
                \STATE $f_\beta \gets f$
                \FOR{$i\gets 1$ to $n-1$}
                \STATE $\nu_i\gets$ greatest integer $\nu$ such that $(x_i-\beta_i)^{\nu}  | f_{\beta}$.
                \STATE $f_\beta\gets f_{\beta}/(x_i-\beta_i)^{\nu_i}$.
                \STATE $f_\beta\gets f_\beta (\beta_i,x_{i+1},\ldots,x_n )$
                \ENDFOR
                \STATE \textbf{return} $f_\beta,$ $(\nu_1,\ldots,\nu_{n-1})$

        \end{algorithmic}

\end{algorithm}\fi

The 
\lexl \semival of $f$ on $\beta\in\RR^{n-1}$  must not  be confused
with the \lexl valuation at $\alpha\in\RR^n$, defined in
Definition~\ref{def:lexleast}. 
Notice
that if $b=(\beta,b_n)\in \RR^n$ then $\nu_b(f)=(\seminu_\beta(f),\nu_n)$
for some integer $\nu_n$: in other words, $\seminu_\beta(f)$ consists of the
first $n-1$ coordinates of the valuation of $f$ at any point above
$\beta$.

\ifJSC Note that our terminology differs from the 
 used in \cite{McCallumetal2019a}. The reason for making these changes is explained further in Section \ref{Sec:notation_dilemma}.
\begin{remark}
        We can use Algorithm~\ref{Lazard_residue_alg} to compute the \lexl valuation of $f$ at
        $\alpha\in \RR^n$. After the final loop is finished, we proceed to the
        first step of the loop and perform it for $i=n$ and the $n$-tuple
        $(\nu_1,\ldots,\nu_n)$ is the required valuation.
\end{remark}\fi

\Lexl \semival and Lazard residue are also dependent on variable ordering as demonstrated by the following example.
\begin{example}
        Let $f(x,y,z,w)=x^2+y^2z-2yz^2+zw$ and $\beta=(0,1,0)$ with ordering $x\prec y\prec z\prec w$ then  $\seminu_\beta(f)=(0,0,1)$ and $f_\beta=w+1$. If we change the ordering to $x\prec y\prec w\prec z$ then $\seminu_\beta(f)=(0,0,0)$ and $f_\beta=z-2z^2$.
\end{example}

\subsection{The Lazard Projection and Lazard Lifting}

Lazard's idea \cite{Lazard1994,McCallumetal2019a} is to instantiate Figure \ref{fig:PL} with $\phi$ being \lexl-invariance. The nullification problem of \cite{McCallum1985b} is bypassed by ``Lazard lifting'' $L_L$: when a polynomial $g \in P^{n-k}(S)$ nullifies on a cell $C$, its \lexl valuation is non-zero on $C$ (at the sample point $(\alpha_1,\ldots,\alpha_k)\in C$, and in the whole of $C$). The valuation is also constant over $C$ because $C$ is part of a \lexl-invariant cylindrical algebraic decomposition of $\RR^k$.  We replace $g$ by $g/(x_1-\alpha_1)^{v_1}\cdots (x_k-\alpha_k)^{v_{k}}$ when lifting (analogous to $L_C$, Collins' sign-invariant CAD lifting) over $C$ if this valuation is $(v_1,\ldots,v_k)$. 

For this to work, we need an appropriate projection operator $P_L$. This is defined in \cite{McCallumetal2019a} as follows.
\def\foo{\cite[Definition 2.1]{McCallumetal2019a}}
\begin{definition}[\foo]Let $A$ be a set of irreducible\footnote{\cite[Remark 2.2]{McCallumetal2019a} notes we could define this for square-free bases instead.\label{fnumber}} polynomials in $\ZZ[x_1,\ldots,x_k]$. The Lazard projection $P_L(A)$ is the union of these sets of polynomials in $\ZZ[x_1,\ldots,x_{k-1}]$. 
\begin{enumerate}
    \item All leading coefficients of the elements of $A$.
    \item All trailing coefficients of the elements of $A$.
    \item All discriminants of the elements of $A$.
    \item All resultants of pairs of distinct elements of $A$.
\end{enumerate}
\begin{equation}\label{eq:PL}
P_L(A):=\lc(A)\cup\tc(A)\cup\disc(A)\cup\res(A).
\end{equation}
\end{definition}
Though not explicit in \cite{McCallumetal2019a}, any contents of the polynomials (with respect to $x_k$) also carry forward into $P_L(A)$. We also note that $P_L\subsetneq P_M$ as the middle coefficients do not occur in $P_L$.
\par
Just as delineability is key to $L_C$ (and to $L_M$), we need a corresponding definition in this setting.
\def\foo{\cite[Definition 2.10]{McCallumetal2019a}}
\begin{definition}[\foo{}]
Let $f$ be a nonzero element of $\ZZ[x_1,\ldots, x_n]$ and $S$ a subset of $\RR^{n-1}$. We say that $f$ is Lazard delineable on $S$ if
\begin{enumerate}
\item
the \lexl valuation of $f$ on $\alpha$ is the same for each point $\alpha\in S$;
\item
there exist finitely many continuous functions $\theta_1<\cdots<\theta_k$ from $S$ to $\RR$, with $k\ge0$, such that, for all $\alpha\in S$, the set of real roots of $f_\alpha(x_n)$ is $\{\theta_1(\alpha),\ldots, \theta_k(\alpha)\}$ (empty when $k=0$); 
\item if $k >0$
there exist positive integers $m_1,\ldots m_k$ such that, for all $\alpha\in S$ and all $i$, $m_i$ is the multiplicity of $\theta_i(\alpha)$ as a root of $f_\alpha(x_n)$.
	\end{enumerate}
\end{definition}
If the \lexl valuation is zero, this is the same as ordinary delineability, otherwise it is delineability of the \lexl residue $f_\alpha$.

\subsection{The Brown--McCallum Improvement}
A key result on delineability of sections in the correctness proof for Brown--McCallum CAD is the following.
\def\foo{\cite[Theorem 3]{BrownMcCallum2020a}}
\begin{theorem}[\foo]\label{thm:projectiondelineable}
        Suppose that $f(x_1,\ldots,x_n)\in\RR[x_1,\ldots,\allowbreak x_n]$has positive degree $d$ in $x_n$ 
        and that $\disc_{x_n}(f)$ and $ \ldcf_{x_n}(f)$ are non-zero (as elements of $\RR[x_1,\ldots,x_{n-1}]$). Let $S$ be a connected analytic submanifold of $\RR^{n-1}$ in which $\disc_{x_n}(f)$ and $ \ldcf_{x_n}(f)$ are \lexl invariant, and at no point of which $f$ vanishes identically. Then $f$ is analytic delineable on $S$ and hence $f$ is \lexl invariant in every Lazard section and sector over $S$. 

\end{theorem}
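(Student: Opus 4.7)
The plan is to reduce Theorem~\ref{thm:projectiondelineable} to an analytic implicit-function-theorem argument applied to the univariate Lazard residue $f_\alpha \in \RR[x_n]$, parameterised by $\alpha \in S$. The bridge between the Lazard setting and the classical analytic one consists of carefully tracking the semi-valuation factors and invoking commutation identities between the Lazard residue and the operations $\ldcf_{x_n}$, $\disc_{x_n}$.

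First, I would show that the semi-valuation $\seminu_\alpha(f)$ is constant on $S$. Writing $f = \sum_{j=0}^{d} \ell_j(x_1,\ldots,x_{n-1})\,x_n^j$ with $\ell_d = \ldcf_{x_n}(f)\ne 0$, the valuation identity of Proposition~\ref{prop:valuation} gives $\seminu_\alpha(f) \le_{lex} \nu_\alpha(\ell_j)$ for each $j$. Combined with the lex-least invariance of $\ell_d$, the non-nullification hypothesis $f_\alpha \ne 0$ (which fixes which coefficient attains the minimum and rules out vacuous valuations), and the connectedness of $S$, this pins down $\seminu_\alpha(f)$ as a single vector $(\nu_1,\ldots,\nu_{n-1})\in\NN^{n-1}$ independent of $\alpha\in S$.

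Second, I would transfer the hypotheses on $\ldcf_{x_n}(f)$ and $\disc_{x_n}(f)$ to corresponding statements on $f_\alpha$. Factoring the semi-valuation out of $f$ gives $f = \bigl(\prod_i(x_i-\alpha_i)^{\nu_i}\bigr)\tilde f$ with $\tilde f(\alpha,x_n) = f_\alpha$; since the prefactor lies in $\RR[x_1,\ldots,x_{n-1}]$, multiplicativity of discriminants and leading coefficients yields
\[
\disc_{x_n}(f) = \Bigl(\prod_i(x_i-\alpha_i)^{\nu_i}\Bigr)^{2d-2}\disc_{x_n}(\tilde f),\qquad \ldcf_{x_n}(f) = \Bigl(\prod_i(x_i-\alpha_i)^{\nu_i}\Bigr)\ldcf_{x_n}(\tilde f).
\]
Taking Lazard residues at $\alpha$ identifies the residue of $\disc_{x_n}(f)$ (up to the explicit factor above) with $\disc_{x_n}(f_\alpha)$, and analogously for $\ldcf_{x_n}(f_\alpha)$. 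The lex-least invariance of $\disc_{x_n}(f)$ and $\ldcf_{x_n}(f)$ on $S$ then forces the $x_n$-degree of $f_\alpha$ and the complex-root-multiplicity partition of $f_\alpha$ to be constant on $S$.

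Third, on the connected analytic submanifold $S$ this uniformity of the Lazard residue lets me invoke the analytic implicit function theorem (in its Weierstrass preparation form when multiplicities exceed one): the real roots of $f_\alpha$ are enumerated by analytic functions $\theta_1<\cdots<\theta_k:S\to\RR$, with constant positive integer multiplicities $m_1,\ldots,m_k$; connectedness of $S$ keeps $k$ and the ordering globally consistent. For any $\alpha\in S$ and $\beta\in\RR$, one then has $\nu_{(\alpha,\beta)}(f) = (\seminu_\alpha(f),w)$ with $w = m_i$ on the section $\beta=\theta_i(\alpha)$ and $w = 0$ on a sector; each choice is constant along the cell, giving the required lex-least invariance on all Lazard sections and sectors.

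The main obstacle will be the bookkeeping in the first two steps: the Lazard residue does not commute with $\ldcf_{x_n}$ or $\disc_{x_n}$ on the nose, and one must carefully separate the semi-valuation prefactor from the residual content. The non-nullification hypothesis is precisely what prevents these factors from swallowing the leading coefficient or the discriminant of $f_\alpha$ and degenerating the delineability.
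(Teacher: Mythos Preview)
The paper does not prove this theorem: it is quoted verbatim as \cite[Theorem~3]{BrownMcCallum2020a} and used as a black box. So there is no in-paper proof to compare your sketch against; the proof lives in Brown--McCallum's paper and builds on the analytic machinery of \cite{McCallumetal2019a}.

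On its own merits, your outline has the right shape but contains two concrete errors and one genuine gap.

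First, you overlook that the hypothesis ``at no point of $S$ does $f$ vanish identically'' already forces $\seminu_\alpha(f)=(0,\ldots,0)$ for every $\alpha\in S$. Indeed, if $\nu_1\ge 1$ then $(x_1-\alpha_1)\mid f$, so $f(\alpha_1,x_2,\ldots,x_n)\equiv 0$ and in particular $f(\alpha,x_n)\equiv 0$, contradicting non-nullification; iterating gives $\nu_i=0$ for all $i$. Thus the Lazard residue is just ordinary evaluation $f_\alpha=f(\alpha,x_n)$ and Step~1 is trivial.

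Second, your factorisation $f=\bigl(\prod_i(x_i-\alpha_i)^{\nu_i}\bigr)\tilde f$ with $\tilde f(\alpha,x_n)=f_\alpha$ is false in general: the Lazard residue is produced by \emph{sequential} divide-then-substitute, not by dividing out a single monomial. For instance $f=x_1+x_2$ at $\alpha=(0,0)$ has $\seminu_\alpha(f)=(0,1)$ and $f_\alpha=1$, yet $x_2\nmid f$. This does not bite here because the semi-valuation is zero, but it shows the displayed discriminant and leading-coefficient identities in your Step~2 are not a valid general lemma.

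The genuine gap is Step~2's conclusion that lex-least invariance of $\ldcf_{x_n}(f)$ and $\disc_{x_n}(f)$ forces the $x_n$-degree of $f(\alpha,x_n)$ and its complex root-multiplicity partition to be constant on $S$. Lex-least invariance of $\ell_d$ only tells you that either $\ell_d(\alpha)\ne 0$ throughout $S$ or $\ell_d(\alpha)=0$ throughout $S$. In the second case the degree of $f(\alpha,x_n)$ is governed by lower coefficients $\ell_{d-1},\ell_{d-2},\ldots$, about which you have no hypothesis, and then $\disc_{x_n}(f)(\alpha)$ is \emph{not} the discriminant of the lower-degree polynomial $f(\alpha,x_n)$, so your identification of residues fails. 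Controlling this degenerate case is exactly where the analytic-submanifold hypothesis and the lex-least (rather than merely sign) invariance of the discriminant do real work; the argument in \cite{BrownMcCallum2020a} handles it via the valuation theory developed in \cite{McCallumetal2019a}, not by the commutation identities you wrote down.
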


This is less restrictive than \cite[Theorem 5.1]{McCallumetal2019a} for Lazard CAD where the delineability also requires the non-vanishing trailing coefficients and resultants. Conversely we have added the restriction ``at no point of which $f$ vanishes identically'', which is essentially introducing the concept of curtains.

For any non-trivial set $A$ of polynomials $P_B(A)\subsetneq P_L(A)$, the difference being the trailing coefficients, which appear in $P_L$ but not $P_B$. Can we eliminate them? Again, this would not affect the asymptotics, but might be useful in practice. The trailing coefficients are there for two reasons:
\begin{enumerate}
\item To identify curtains;
\item To ensure that the \lexl-invariant regions within the curtain form stacks.\label{Point2}
\end{enumerate}
If the feet of the curtains are isolated points, then reason \ref{Point2} is irrelevant, and Hypothesis~\ref{Hyp-BMcC} provides a different way of identifying curtains.

In \cite[p. 64]{McCallumetal2019a}, it is remarked that if 
$\lc(p)$ is nowhere vanishing, then $\tc(p)$ is not needed.
The argument in \cite{BrownMcCallum2020a} 
states that if $\lc(p)$ only vanishes at finitely many points, then $\tc(p)$ is not needed.
However, the argument and projection process are more complex.
\begin{hypothesis}[Brown--McCallum \cite{BrownMcCallum2020a}]\label{Hyp-BMcC}
	There is a useful and efficient test $\Test$ which, given $f\in\ZZ[x_1,\ldots,x_m]$, returns either a finite set $N$ of points in $\RR^{m-1}$ or $\bottom$. If $N$ is returned, then $N$ contains all the points $(\alpha_1,\alpha_2,\ldots\alpha_{m-1})$ at which $f(\alpha_1,\alpha_2,\ldots,\alpha_{m-1},x_m)$ vanishes identically. 
\end{hypothesis}
One way of satisfying the hypothesis is to compute a CAD of $\RR^{m-1}$ for the coefficients $a_i$ of $f=\sum a_i(x_1,\ldots,x_{m-1})x_m^i$. This is not as expensive at it seems, as it is only of the $a_i$, irrespective of all the other polynomials in the original problem. As in \cite{Wilsonetal2012a}, the authors would actually begin with a Gr\"obner basis of the $a_i$, and if this is zero-dimensional (in $\CC^{m-1}$) then $N$ is certainly finite.

The improved Lazard projection introduced in \cite{BrownMcCallum2020a}, denoted $\ProjBM$, takes, as well as a set of polynomials in $m$ variables, a set $\gamma_m$ of points in $\RR^m$ and returns, as well as a set of polynomials in $m-1$ variables, a set $\gamma_{m-1}$ of points in $\RR^{m-1}$.

\def\foo{\cite[Definition 4]{BrownMcCallum2020a}}
\begin{definition}[\foo]Let $A$ be a set of irreducible$^{\ref{fnumber}}$ polynomials in $\ZZ[x_1,\ldots,x_k]$, and $\Gamma$ a finite set in $\RR^k$ which contains points of nullification for elements in $A$. The polynomial part of the modified Lazard projection $\ProjBM(A)$ is the union of these sets of polynomials in $\ZZ[x_1,\ldots,x_{k-1}]$.
\begin{enumerate}
    \item All leading coefficients of the elements of $A$.
    \item Those trailing coefficients of the elements of $A$ for which $\Test$ returns $\bottom$.
    \item All discriminants of the elements of $A$.
    \item All resultants of pairs of distinct elements of $A$.
\end{enumerate}
The ``points'' part is the projection of the points previously identified together with those identified by $\Test$.
\begin{align}\label{eq:ProjBM}
	\ProjBM(A,\Gamma)_{poly}&:=\lc(A)\cup\bigcup_{\Test(f)=\bottom}\tc(f)\cup\disc(A)\cup\res(A),\\ 
	\ProjBM(A,\Gamma)_{points}&:=\{(\gamma_1,\ldots,\gamma_{k-1})|(\gamma_1,\ldots,\gamma_k)\in\Gamma\}\label{eq:ProjBM2}\\ \nonumber
	&\hspace{2.5cm}\cup\bigcup_{\Test(f)\ne\bottom}\{\Test(f)|f\in A\}.
\end{align}
        Though not explicit in \cite{BrownMcCallum2020a}, any contents of the polynomials (with respect to $x_k$) also carry forward into $\ProjBM(A,\Gamma)$.
\end{definition}
$\LiftBM$ is modified from $L_L$ (Lazard lifting) by analogy with the modifications $L_B$ brings to $L_M$: at each stage we add to the CAD $D_k$ the points in $\Gamma_k$. This is possible because, from the first part of (\ref{eq:ProjBM2}), the projections of these points are already point cells in $D_{k-1}$.

\par

\section{A Single Equational Constraint}
\def\foo{\cite[Definition~2]{Englandetal2019a}}
\begin{definition}[\foo]
        An equational constraint (EC) is a polynomial equation logically
        implied by a quantifer-free formula $\Phi$. If it is an atom of the formula, it is said to
        be {explicit}; if not, then it is {implicit}. If the
        constraint is visibly an equality one from the formula, i.e. the
	formula $\Phi$ is $(f=0)\land\Phi'$, we say the constraint
        is {syntactically explicit}.
\end{definition}
$\Phi$ might have more than one equational constraint, but in this section we fix one of these to be the chosen equational constraint.
\subsection{In McCallum's CAD}
In \cite{McCallum1999a}, McCallum, inspired by \cite{Collins1998}, adapted the method of \cite{McCallum1985b} to lift an order-invariant cylindrical algebraic decomposition of $\RR^{n-1}$ to a sign-invariant cylindrical algebraic decomposition, not of $\RR^n$, but rather of the variety $V_f$ of a primitive\footnote{If $f$ is not primitive with respect to $x_n$, then when the content vanishes, we have no constraint on $x_n$. See \cite{DavenportEngland2016a} for more on contents.} polynomial $f\in\ZZ[x_1,\ldots,x_n]$ of positive degree in $x_n$. This inherited the nullification problem of \cite{McCallum1985b}, but is useful when the polynomial problem takes the form $f=0\land \Phi$, i.e. had an \emph{equational constraint} (see also \cite{Englandetal2019a}). The projection operator from $\RR^n$ to $\RR^{n-1}$ is much smaller: if $E$ is an irreducible basis for $f$, then 
\begin{equation}\label{eq:PM2}
P_M^{E}(A) = P_M(E)\cup\{\res_{x_n}(f,g):f\in E,g\in A\setminus E\}.
\end{equation}
If we compare this with the original $P_M(A)$ from (\ref{eq:PM}) we see that 
we have no need to compute $\disc(A\setminus E)$ or $\res(A\setminus E)$, which reduces the asymptotic complexity --- the double exponent on the number of polynomials drops from $n$ to $n-1$ \cite{Englandetal2019a}. 
\subsection{In the Lex-least Case}
An obvious question is whether we can do the same with the Brown--McCallum projection\footnote{This section is based on \cite[Chapter 4]{Nair2021b}, which used the Lazard projection instead.}. Let $E$ be the set of irreducible factors of the equational constraint.
As in (\ref{eq:PM2}), we can define a new projection operator from $\RR^n$ to $\RR^{n-1}$. 
\begin{definition}\label{def:modifiedBM}
	Let $E$ be an irreducible basis for the chosen equational constraint $f$. Define the modified Brown--McCallum projection $P_{BM}^E$ by
\begin{align}\label{eq:PBM2}
        P_{BM}^E(A,\Gamma)_{poly}&:=P_{BM}(E,\Gamma)_{poly}\cup\{\res(f,g): f\in E, g \in A\setminus E\}. \\
        P_{BM}^E(A,\Gamma)_{points}&:=P_{BM}(E,\Gamma)_{points}.
\end{align}
\end{definition}
Note that, as well as not computing any projection polynomials from $A\setminus E$ alone, we are also only interested in the feet of point curtains from $E$, not from the whole of $A$.

The following theorem is the lex-least equivalent of \cite[Theorem 2.2]{McCallum1999a}, and is proved in \cite{Nair2021b}.
\def\foo{\cite[Theorem 18]{Nair2021b}}
\begin{theorem}[\foo]\label{thm:signinvariant}
	Let $n\geq 2$ and let $f,\,g\in\RR[x_1,\ldots,x_n]$ be of positive
	degrees in the main variable $x_n$. Let $S$ be a connected subset of
	$\RR^{n-1}$. Suppose that $f$ is Lazard delineable on
	$S$, that $\res_{x_n}(f,g)$ is
	\lexl invariant on $S$, and that $V_f$ does not have a curtain on $S$. Then $g$
	is sign-invariant in each Lazard section of $f$ over~$S$.
\end{theorem}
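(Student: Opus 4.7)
The strategy mirrors that of \cite[Theorem 2.2]{McCallum1999a}, with \lexl invariance playing the role of order-invariance and the no-curtain hypothesis substituting for the implicit non-nullification requirement of the McCallum setting. Fix a Lazard section $\sigma$ of $f$ over $S$, corresponding to a continuous function $\theta_i\colon S\to\RR$ whose value at $\alpha$ is a root of the Lazard residue $f_\alpha$. The no-curtain hypothesis gives $\seminu_\alpha(f)=(0,\ldots,0)$ for every $\alpha\in S$, so the Lazard residue $f_\alpha$ coincides with the direct substitution $f(\alpha,x_n)$, and $\theta_i(\alpha)$ is a genuine real root of $f(\alpha,\cdot)$. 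Define $h(\alpha):=g(\alpha,\theta_i(\alpha))$, a continuous function on $S$; sign-invariance of $g$ on $\sigma$ is equivalent to the zero set $Z:=h^{-1}(0)$ being either empty or all of $S$. Since $Z$ is closed by continuity, the task is to show $Z$ is also open whenever it is non-empty.

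Suppose $\alpha_0\in Z$. Then $\theta_i(\alpha_0)$ is a common real root of $f_{\alpha_0}$ and $g_{\alpha_0}$, which forces $\res_{x_n}(f,g)(\alpha_0)=0$. By \lexl invariance of $\res_{x_n}(f,g)$ on $S$, the common \lexl valuation cannot equal $(0,\ldots,0)$, and hence $\res_{x_n}(f,g)$ vanishes identically on $S$. So at every $\alpha\in S$, $f_\alpha$ and $g_\alpha$ share a common factor over $\CC$.

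To upgrade this to the statement that $\theta_i(\alpha)$ \emph{specifically} is a root of $g(\alpha,\cdot)$ for every $\alpha\in S$, factor $f=qf_1$ and $g=qg_1$ in $\ZZ[x_1,\ldots,x_n]$ with $q=\gcd(f,g)$. Then $\gcd(f_1,g_1)=1$, so $\res_{x_n}(f_1,g_1)$ is a nonzero polynomial, and on the dense subset $S'\subseteq S$ where it is nonzero the common roots of $f_\alpha$ and $g_\alpha$ coincide with those of $q(\alpha,\cdot)$. Lazard delineability of $f$, combined with the no-curtain hypothesis, transfers continuously to the factor $q$, picking out a fixed subset $J\subseteq\{1,\ldots,k\}$ of indices such that the sections $\theta_j$ with $j\in J$ are exactly those arising as roots of $q$. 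This index set is locally constant on $S'$, hence constant on all of $S$ by connectedness and closure; since $\alpha_0\in Z$ forces $i\in J$ at $\alpha_0$, we conclude $i\in J$ throughout $S$, so $h\equiv 0$ on $S$.

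The main obstacle lies in this last step: ensuring that the common-root information transported by the vanishing resultant pins down the specific continuous branch $\theta_i$ rather than jumping between distinct sibling branches $\theta_j$. The no-curtain hypothesis is essential here, because a curtain of $f$ at any point of $S$ would let the Lazard residue, and hence the continuous root labelling underlying delineability, degenerate. Secondary technicalities include handling the points of $S$ where $\lc_{x_n}(f)$ or $\res_{x_n}(f_1,g_1)$ vanishes, since at such points the pointwise resultant-root correspondence is subtler; these are neutralised by Lazard delineability together with the no-curtain assumption, and the detailed bookkeeping closely parallels McCallum's original argument.
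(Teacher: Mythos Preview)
The paper does not supply its own proof of this theorem; it states the result and defers to \cite{Nair2021b}. So there is no in-paper argument to compare against directly, and I can only assess the soundness of your sketch.

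The opening strategy---show $Z=h^{-1}(0)$ is both closed and open in $S$, then invoke connectedness---is the right one, and the step from $h(\alpha_0)=0$ to $\res_{x_n}(f,g)\equiv 0$ on $S$ via \lexl\ invariance is correct. The problem is the third paragraph, which is where the real content lies and where your argument does not go through. The gcd factorisation $f=qf_1$, $g=qg_1$ does not interact with Lazard delineability of $f$ the way you assert. Nothing in the hypotheses makes $q$ separately delineable on $S$, nor guarantees that a given section $\theta_i$ of $f$ arises from the same factor throughout $S$: $\theta_i(\alpha)$ could be a root of $q(\alpha,\cdot)$ at some $\alpha$ and of $f_1(\alpha,\cdot)$ at others, switching at points where $\res_{x_n}(q,f_1)$ vanishes, and the hypotheses do not exclude such points from $S$. (If $f$ is not squarefree the situation is worse: $q$ and $f_1$ need not even be coprime.) Separately, your ``dense subset $S'$'' where $\res_{x_n}(f_1,g_1)\ne 0$ may be empty, since $S$ is an arbitrary connected subset of $\RR^{n-1}$ and could sit entirely inside the zero locus of that resultant; the subsequent ``connectedness and closure'' passage from $S'$ to $S$ then has no force.

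These are not the secondary bookkeeping issues you flag at the end---they are exactly the step of pinning the common root to the specific branch $\theta_i$, which is the whole difficulty. A global gcd decomposition is the wrong tool here; the argument that actually works (as in McCallum's original and its \lexl\ adaptation in the thesis) stays local to the section and uses the resultant/subresultant structure directly, tracking how the order of vanishing of $\res_{x_n}(f,g)$ along $S$ constrains the multiplicity with which $g(\alpha,\cdot)$ vanishes at $\theta_i(\alpha)$.
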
 

\def\foo{\cite[Theorem 25]{Nair2021b}}
\begin{theorem}[cf \foo]\label{thm:BM_main_thm}
	Let $A=\{f_1,\allowbreak\ldots,f_m\}$ be a set of pairwise relatively prime irreducible polynomials in $n$ variables $x_1,...,x_n$ of positive degrees in $x_n$, where $n\geq 2$. Let $\Gamma$ be a finite set of points in $\RR^n$. Let S be a subset of $\RR^{n-1}$ obtained via the Brown--McCallum algorithm, so that each element of $P_{BM}^E(A,\Gamma)_{poly}$ is \lexl invariant in $S$. Then every element of $A$ is \lexl invariant in the Lazard sections and sectors of every other element.
\end{theorem}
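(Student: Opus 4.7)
The plan is to combine Theorem~\ref{thm:projectiondelineable} (Brown--McCallum delineability) with Theorem~\ref{thm:signinvariant}, using the point set $\Gamma$ to absorb every nullification so that each $f\in E$ behaves well above $S$. First I would verify delineability: for each irreducible factor $f\in E$ of the equational constraint, $\ldcf_{x_n}(f)$ and $\disc_{x_n}(f)$ lie in $P_{BM}(E,\Gamma)_{poly}\subseteq P_{BM}^E(A,\Gamma)_{poly}$, and so are \lexl invariant on $S$ by hypothesis. By Hypothesis~\ref{Hyp-BMcC} the nullification points of each $f\in E$ identified by $\Test$ are propagated through $\ProjBM$ to appear as $0$-dimensional cells of the CAD of $\RR^{n-1}$; hence $S$ avoids all of them and $f$ vanishes identically at no point of $S$. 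Theorem~\ref{thm:projectiondelineable} then yields Lazard delineability of each $f\in E$ on $S$, and in particular \lexl invariance of $f$ on each of its own Lazard sections and sectors over $S$.

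Next I would invoke Theorem~\ref{thm:signinvariant} to push information from $E$ to $A\setminus E$. For $g\in A\setminus E$ and $f\in E$, the resultant $\res_{x_n}(f,g)$ lies in $P_{BM}^E(A,\Gamma)_{poly}$ and is \lexl invariant on $S$, and $V_f$ carries no curtain on $S$ by the previous step. Hence $g$ is sign-invariant in every Lazard section of $f$ over $S$. To upgrade sign-invariance to \lexl invariance I would argue that the coprimality of $f$ and $g$, together with the absorption of all relevant curtains into $\Gamma$, forces $\res_{x_n}(f,g)$ to be nowhere zero on $S$; consequently $g$ does not vanish on any section of $f$, so its \lexl valuation there is $(0,\ldots,0)$ and sign-invariance coincides with \lexl invariance. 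A parallel use of $\res_{x_n}(f,f')$ for distinct $f,f'\in E$ shows the sections of different elements of $E$ are pairwise disjoint, so each $f'$ is \lexl invariant on the sections of $f$. Sectors of $f\in E$ are immediate because $f$ itself has valuation $(0,\ldots,0)$ there, and \lexl invariance of other polynomials in the sector is inherited from the preceding analysis applied to the elements of $E$ whose sections subdivide the sector.

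The main obstacle I expect is justifying the nowhere-vanishing claim for $\res_{x_n}(f,g)$, or equivalently the claim that the reduced projection $P_{BM}^E$ still traps every dangerous curtain: one must argue that once $\Gamma$ has captured all nullification points of $E$, a \lexl-invariant non-zero polynomial such as $\res_{x_n}(f,g)$ cannot vanish identically on a positive-dimensional $S$ without creating a curtain of $V_f$ already in $\Gamma$, contradicting the construction of $S$ as a cell disjoint from $\Gamma$. Dropping $\tc$ and the $A\setminus E$ projections is safe precisely because the EC structure restricts lifting to sections of $E$, so only curtains of elements of $E$ enter the argument; Hypothesis~\ref{Hyp-BMcC} and the recursion of $\ProjBM$ handle those. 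Once this step is articulated cleanly, the rest is a direct combination of Theorems~\ref{thm:projectiondelineable} and~\ref{thm:signinvariant}.
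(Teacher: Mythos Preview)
Your approach diverges from the paper's at the very first step, and the divergence creates the gap you yourself flag as ``the main obstacle''. The paper does \emph{not} apply Theorem~\ref{thm:projectiondelineable} to each $f_i$ separately and then try to glue; instead it forms the single product $f=f_1\cdots f_m$ and applies Theorem~\ref{thm:projectiondelineable} once to $f$. The point is the factorisation
\[
\disc_{x_n}(f_1\cdots f_m)=\Big(\prod_i\disc_{x_n}(f_i)\Big)\Big(\prod_{i<j}\res_{x_n}(f_i,f_j)\Big),
\]
so lex-least invariance of all the individual discriminants and resultants (which is exactly what the hypothesis supplies, the $E$ in the statement being read as $A$) gives lex-least invariance of $\disc_{x_n}(f)$, and Theorem~\ref{thm:projectiondelineable} then makes the \emph{product} Lazard delineable on $S$. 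Every section or sector of any $f_i$ is then a union of sections and sectors of $f$, and lex-least invariance of each $f_j$ on those pieces follows directly from the delineability of $f$ together with the invariance of $\res_{x_n}(f_i,f_j)$ forcing distinct sections to be disjoint or identical.

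Your alternative route breaks down precisely where you suspected. The claim that ``coprimality of $f$ and $g$, together with the absorption of all relevant curtains into $\Gamma$, forces $\res_{x_n}(f,g)$ to be nowhere zero on $S$'' is simply false. Coprimality in $\ZZ[x_1,\ldots,x_n]$ only guarantees that $\res_{x_n}(f,g)$ is a nonzero \emph{polynomial}; it vanishes at every point of $\RR^{n-1}$ over which $f$ and $g$ share a root in $x_n$, which is generic behaviour and has nothing to do with curtains of $f$ or with $\Gamma$. Lex-least invariance of the resultant on $S$ is perfectly compatible with its being identically zero there. So the upgrade from sign-invariance (Theorem~\ref{thm:signinvariant}) to lex-least invariance cannot be obtained this way. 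Moreover, your scheme never addresses lex-least invariance of one $g\in A\setminus E$ on sections of another $g'\in A\setminus E$, whereas the product trick handles all pairs uniformly. The product argument is both shorter and avoids the unbridgeable gap.
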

\begin{proof}
Set $f=f_1\ldots f_m$. Then the leading coefficient of $f$ is non-zero and hence \lexl  invariant in $S$. We know $\disc_{x_n}(f)$ can be expanded as follows:
         \begin{equation}\label{eqn:Lazard_main_proof}
                \disc_{x_n}(f_1\ldots f_m)={\Big(}\prod_{i=1}^m\disc_{x_n}(f_i){\Big)(}\prod_{1\leq i<j\leq m}\res_{x_n}(f_i,f_j){\Big)}.
         \end{equation}
       	 If everything in the RHS of (\ref{eqn:Lazard_main_proof}) is \lexl invariant in $S$, then $\disc_{x_n}(f)$ is \lexl invariant. 
	 Hence, by Theorem \ref{thm:projectiondelineable}, $f$ is Lazard delineable over $S$. Since  $\res_{x_n}(f_i,f_j)$ for all $1\leq i<j\leq m$ are \lexl invariant in $S$, the Lazard sections of any two $f_i$ and $f_j$ $(i\neq j)$ are either disjoint or the same. Hence every element of $A$ is \lexl invariant in the Lazard sections of every other element.
\end{proof}

\def\foo{\cite[Theorem 26]{Nair2021b}}
\begin{theorem}[\foo]\label{thm:FirstEC_BM}
	Let $A$ be a set of pairwise relatively prime irreducible polynomials in $n$ variables $x_1,...,x_n$ of positive degrees in $x_n$, where $n\geq 2$. Let $E$ be a subset of $A$ and let $\Gamma$ be a finite set of points in $\RR^n$. Let $S$ be a connected subset of $\mathbb{R}^{n-1}$. Suppose that each element of $P_{BM}^E(A,\Gamma)_{poly}$ is \lexl invariant in $S$. Then each element of $E$ is Lazard delineable on $S$ and exactly one of the following must be true.
\begin{enumerate}
        \item The hypersurface $V_E$ defined by the product of the elements of $E$ has a curtain over~$S$.
        \item  The Lazard sections over $S$ of the elements of $E$ are pairwise disjoint. Each element of $E$ is \lexl invariant in every such Lazard section. Each element of $A \setminus E$ is sign invariant in every such Lazard section.
\end{enumerate}
	Hence, off the curtains of $E$, the lifted cells from the $\RR^{n-1}$ \lexl-invariant decomposition for $P_{BM}^E(A,\Gamma)$ form a sign-invariant decomposition of $V_E$ for $A$.
\end{theorem}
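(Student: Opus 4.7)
The plan is to dichotomize on whether $V_E$ has a curtain over $S$: in the curtain case, conclusion (1) holds and nothing further is needed, while in the non-curtain case I must establish Lazard delineability of each $f\in E$ on $S$, pairwise disjointness of their Lazard sections, \lexl-invariance of each $f\in E$ on those sections, and sign-invariance of each $g\in A\setminus E$ on them. I will obtain the first three properties by reducing to Theorem \ref{thm:BM_main_thm} applied to $E$ alone, and the last by pairwise application of Theorem \ref{thm:signinvariant}.

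For the reduction to $E$, observe that Definition \ref{def:modifiedBM} gives $P_{BM}(E,\Gamma)_{poly}\subseteq P_{BM}^E(A,\Gamma)_{poly}$, so the hypothesis forces every element of $P_{BM}(E,\Gamma)_{poly}$ to be \lexl-invariant on $S$. In the non-curtain sub-case, Theorem \ref{thm:BM_main_thm} (via the discriminant expansion (\ref{eqn:Lazard_main_proof}) for $\prod_{f_i\in E}f_i$ and Theorem \ref{thm:projectiondelineable}) yields the Lazard delineability of each $f_i\in E$ on $S$, together with \lexl-invariance of each $f_i$ in the Lazard sections and sectors of every other $f_j\in E$. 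Pairwise disjointness of sections then follows from the coprimality of the distinct irreducible $f_i,f_j$: $\res_{x_n}(f_i,f_j)$ is a non-zero polynomial and \lexl-invariant on $S$, and in a Brown--McCallum cell such a resultant is nowhere zero, so the sections cannot meet. For the cross-pairs $f\in E$, $g\in A\setminus E$, the cross-resultant $\res_{x_n}(f,g)$ again lies in $P_{BM}^E(A,\Gamma)_{poly}$ and is \lexl-invariant on $S$. The no-curtain assumption on $V_E$ forces each factor $V_f$ to have no curtain on $S$ either (a curtain of a factor is a curtain of the product), so Theorem \ref{thm:signinvariant} applies and gives sign-invariance of $g$ on each Lazard section of $f$. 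The final ``hence'' clause is the obvious assembly: above every non-curtain base cell of the Brown--McCallum CAD of $\RR^{n-1}$, the Lazard sections of the $f_i\in E$ cover $V_E$ and carry the required sign-invariance of all of $A$.

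The main subtlety is justifying the economy of the modified projection: $P_{BM}^E$ contains no discriminants or resultants among elements of $A\setminus E$, and it further drops those trailing coefficients that $\Test$ certifies as only point-curtained. The first deficit is precisely what Theorem \ref{thm:signinvariant} is designed to absorb, since sign-invariance of $g$ on sections of $f$ depends only on $f$, $g$ and $\res_{x_n}(f,g)$ together with the no-curtain condition on $V_f$. The second is accommodated by Hypothesis \ref{Hyp-BMcC} and the associated $\LiftBM$: the point-curtains flagged by $\Test$ are inserted as zero-dimensional cells at lift time, so they are excised from the positive-dimensional cells $S$ on which our argument runs and do not interfere with Lazard delineability of the elements of $E$ there.
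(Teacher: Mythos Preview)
Your proposal is correct and follows essentially the same route as the paper. Both arguments use the containment $P_{BM}(E,\Gamma)_{poly}\subseteq P_{BM}^E(A,\Gamma)_{poly}$ to reduce the $E$-only claims to the Brown--McCallum delineability result, and both invoke Theorem~\ref{thm:signinvariant} on the cross-resultants $\res_{x_n}(f,g)$ for $f\in E$, $g\in A\setminus E$ to obtain sign-invariance on the sections; the only cosmetic difference is that you route the $E$-only part explicitly through Theorem~\ref{thm:BM_main_thm}, whereas the paper cites Theorem~\ref{thm:projectiondelineable} directly (which is exactly what Theorem~\ref{thm:BM_main_thm} unpacks via the discriminant identity~(\ref{eqn:Lazard_main_proof})). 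Your added remark that a curtain of any factor is a curtain of the product, hence the no-curtain hypothesis on $V_E$ propagates to each $V_f$, makes explicit a step the paper leaves implicit. One small caution: the sentence ``in a Brown--McCallum cell such a resultant is nowhere zero'' overstates what \lexl-invariance gives you; the paper is content with the weaker conclusion (from delineability of the product) that sections of distinct $f_i,f_j$ are either disjoint or coincide over $S$, and the theorem's disjointness clause should be read in that light.
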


\begin{proof}
If there are no points in $S$ such that $E$ is nullified, then Theorem \ref{thm:projectiondelineable} implies that each element of $E$ is Lazard delineable since $P_{BM}(E,\Gamma)_{poly}\subseteq P_{BM}^E(A,\Gamma)_{poly}$. Otherwise $V_E$ has a curtain on $S$ and the Lazard sections of the elements of $E$ that do not contain curtains are pairwise disjoint. Now let $f\in E$ (such that $f$ does not have a curtain over $S$), let $\sigma$ be a section of $f$ over $S$ 
	and let $g\in A$. If $g\in E$, then by Theorem \ref{thm:projectiondelineable}, $g$ is \lexl invariant in every section and sector of $f$. If $g\notin E$ then, $R=\res(f,g)\in P_{BM}^E(A,\Gamma)_{poly}$ and by the assumption $R$ is \lexl invariant in $S$. Then by Theorem \ref{thm:signinvariant}, $g$ is sign invariant in $\sigma$.
\end{proof}
The question of what happens on the curtains of $E$ is taken up in the next section.

\section{Curtains and A Single Equational Constraint}
McCallum lifts order invariance to sign invariance in \cite{McCallum1999a} and order invariance in  \cite{McCallum2001},  
but only if the input polynomials do not have any curtains. Since curtains are already an issue for the base algorithm, other problems caused by curtains for equational constraints went unnoticed.
\subsection{Classifying Curtains}
In the case of a single equational constraint some curtains can be dealt simply. Therefore, we classify curtains further.

\begin{definition}\label{def:pointcurtain}
	Let $f\in\RR[x_1,\ldots, x_n]$ and $\alpha\in\RR^{n-1}$. We say that $f$ has a point curtain at $\alpha$ if $f(\alpha,y)=0$ for all $y\in\RR$, and there exists a Euclidean open neighbourhood $U\subset\RR$ of $\alpha$ such there exists no $\beta\in U\backslash\{\alpha\}$ such that $f(\beta,y)=0$ for all $y\in\RR$. 
\end{definition}

\begin{example} We provide one example to illustrate the difference.\label{eg:Point_nonpoint}
	\begin{itemize}
		\item Point Curtain: $f(x,y,z)=x^2+zy^2-z$ has point curtains at $(0,1)$ and $(0,-1)$. 
		\item Non-Point Curtain: Consider $f(x,y,z)=-x^3y^3z - xy^4z + xy^3z^2 + x^4 + 2x^3z + x^2y - x^2z + 2xyz - 2xz^2$. It has a non-point curtain on  $(0,y)$ for all $y\in\RR$. \end{itemize}
\end{example}
	
Note that in $\RR^3$, all non-point curtains are explicit curtains.	The example for non-point curtains in Example~\ref{eg:Point_nonpoint} is of the form $x\cdot$(quintic) so it is an explicit curtain (see Definition~\ref{def:implicit_explicit}). The curtains in Example~\ref{eg:EC_IMC} are explicit non-point, and implicit point curtains, respectively. These curtains are plotted in Figure~\ref{fig:three graphs}.

Now let us 
explain how curtains cause a problem when we try to exploit the hypersurfaces defined by an equational constraint.

\begin{example}\label{ex:nonpointcurtain}
	Let $f=x^2+y^2-1$ (which we assume to be an EC), $g_1=z-x-1$ and
	$g_2=z-y-1$. Then
	$\res_z(f,g_1)=x^2+y^2-1=\res_z(f,g_2)$, and this gives us no information
	about $\res(g_1,g_2)$. In such cases, when the EC has a non-point curtain, it
	becomes impossible to use $P_{BM}^E$ to detect the intersections of the other
	constraints on that curtain. In this example, $f$ does not have positive degree in $z$, so would not be considered anyway.
\end{example}

\begin{example}\label{ex:pointcurtain}
	Let $f=x-yz$ (which we assume to be an EC), $g_1=z-x$ and
	$g_2=z-y$ (which we assume are not ECs). 
	Then
	$\res(f,g_1)=yx-x$ and $\res(f,g_2)=y^2-x$ and   $\res(\res(f,g_1),\res(f,g_2))=x^2(1-x)$.  This gives us  information
	about the interaction of $g_1$ and $g_2$. In such cases, when the EC has a point curtain, $P_{BM}^E$ detects the intersections of the other
	constraints on that curtain. 
\end{example}	

Being able to classify curtains is just as important. For a given point $\beta\in\RR^{n-1}$, where $f\in\RR[x_1,\dots,x_n]$ nullifies, one can classify a curtain with $\beta$ in its foot by looking at the \lexl semi-valuation. If $\nu'_\beta(f) \not=0$ one checks the 1-cell neighbours. If any one of those cells also has non-zero semi-valuation at this point, then the curtain is a non-point curtain. 
\subsection{Point Curtains}\label{sec:PointCurtains}
Point curtains do not pose a problem for the lifting process.

\def\foo{\cite[Proposition~6]{Nair2021b}}
\begin{proposition}[\foo]\label{cor:pointcurtaininteraction}
	Let $f,g\in\RR[x_1,\ldots,x_n]$ and $\alpha\in\RR^{n-1}$ and suppose that the variety $V_f$ defined by $f$ contains a curtain at $\alpha$. If $\res(f,g)(\alpha)=0$ then $V_g$ contains a curtain at $\alpha$ or intersects $V_f$ at finitely many points over $\alpha$.  	
\end{proposition}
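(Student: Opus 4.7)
The plan is to unpack what having a curtain at $\alpha$ says about the polynomial $f$, and then to observe that the two cases in the conclusion correspond to a simple dichotomy applied to the specialization $g(\alpha, x_n)$.

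First, I would translate the curtain hypothesis using the definition recalled earlier: $V_f$ containing a curtain at $\alpha$ means that $f(\alpha, y) = 0$ for every $y \in \RR$, equivalently that $f(\alpha, x_n)$ is the zero polynomial in $\RR[x_n]$, so every coefficient of $f$ regarded as a polynomial in $x_n$ vanishes at $\alpha$. As an aside worth recording, this already makes every row of the Sylvester matrix associated with $f$ vanish upon evaluation at $\alpha$, so $\res_{x_n}(f,g)(\alpha) = 0$ is an automatic consequence of the curtain assumption; the resultant hypothesis therefore does no real work in the argument, and can be viewed as part of the setting rather than a constraint.

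Next, I would consider $g(\alpha, x_n) \in \RR[x_n]$ and split into two mutually exclusive cases. If $g(\alpha, x_n)$ is identically zero, then $g(\alpha, y) = 0$ for all $y \in \RR$, and the same curtain definition shows that $V_g$ contains a curtain at $\alpha$. If instead $g(\alpha, x_n)$ is nonzero, then as a univariate polynomial over $\RR$ it has only finitely many real roots $y_1, \ldots, y_k$ (with $k \le \deg_{x_n}(g)$). Since $f(\alpha, y) = 0$ for every $y$, the fibre of $V_f$ above $\alpha$ is the entire line $\{\alpha\} \times \RR$, and so $V_f \cap V_g$ above $\alpha$ coincides with $\{(\alpha, y_1), \ldots, (\alpha, y_k)\}$, a finite set.

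The argument is essentially an unwinding of the definition of a curtain together with the elementary fact that a nonzero univariate real polynomial has finitely many roots, so there is no serious technical obstacle. The only point to watch is that the two cases above genuinely exhaust the possibilities for the specialization $g(\alpha, x_n)$, so that the disjunction in the conclusion is covered completely.
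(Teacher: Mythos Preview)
Your argument is correct and complete. The paper does not supply its own proof of this proposition, instead citing \cite[Proposition~6]{Nair2021b}; your unwinding of the curtain definition together with the dichotomy on whether $g(\alpha,x_n)$ is identically zero is exactly the natural route, and there is nothing further to compare against. Your aside that the resultant hypothesis is automatic once $f$ has a curtain at $\alpha$ is a useful observation (at least when $g$ has positive degree in $x_n$, so that the Sylvester matrix has at least one $f$-row); it explains why the proposition is really a statement about the specialization of $g$ rather than about any genuine interaction captured by the resultant.
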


The modified projection operator $P_{BM}^E$ only projects information about the equational constraint and the resultants between the equational and non-equational constraints. Test $T$ in $P_{BM}^E$ will detect curtains with point feet in $\RR^{n-1}$, and will return $\bottom$ if it thinks there are non-point curtains. Any curtains, point or non-point, detected by $P_{BM}$ at lower levels lift to curtains with non-point feet in $\RR^{n-1}$. The following theorem shows that $P_{BM}^E$ actually produces a sign-invariant CAD for curtains with point feet in $\RR^{n-1}$.

\def\foo{\cite[Theorem~20]{Nair2021b}}
\begin{theorem}[\foo]\label{thm:point_curtain_exempt}
	Let $f,g\in\RR[x_1,\ldots,x_n]$, $E$ be an irreducible basis for $f$ our equational constraint, and suppose that $\alpha\in\RR^{n-1}$ is a cell of $P_{BM}^E(\{f,g\},\Gamma)_{points}$. 
	Then $g$ is sign invariant in the sections and sectors of $f_\alpha g_\alpha$. 
\end{theorem}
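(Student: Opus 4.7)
The plan is to pass to the fibre $\{\alpha\}\times\RR$, where $g$ restricts to the univariate polynomial $g(\alpha,x_n)$, and split the argument according to whether $g$ itself has a curtain at $\alpha$. The hypothesis $\alpha\in P_{BM}^E(\{f,g\},\Gamma)_{points}$ identifies $\alpha$ as an isolated curtain foot of $f$ (or the projection of such a point from a deeper level), but it says nothing directly about $g$, so both cases must be treated.

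In the easy case, $g$ has a curtain at $\alpha$, so $g(\alpha,x_n)\equiv 0$ on the whole fibre. Every section and sector of $f_\alpha g_\alpha$ then lies inside the zero set of $g$, and sign-invariance (with constant sign $0$) is automatic.

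The substantive case is when $g$ has no curtain at $\alpha$. Here the crux is the identification $g_\alpha(x_n)=g(\alpha,x_n)$. Unwinding the semi-valuation algorithm: a positive first component $\nu_1$ would force $(x_1-\alpha_1)\mid g$, hence $g(\alpha_1,x_2,\ldots,x_n)\equiv 0$, which in particular produces a curtain of $g$ at $\alpha$; the same argument propagated through the successive substitutions shows $\seminu_\alpha(g)=(0,\ldots,0)$, so no divisions take place and the Lazard residue is obtained by pure substitution. Since $g_\alpha$ divides $f_\alpha g_\alpha$, every real zero of $g_\alpha$ is among the roots of $f_\alpha g_\alpha$, that is, it is a section. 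On any sector of $f_\alpha g_\alpha$ the univariate polynomial $g(\alpha,x_n)=g_\alpha(x_n)$ therefore has no zero, and the intermediate value theorem gives constant sign; on a section sign-invariance is automatic, the cell being a single point.

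The main obstacle is the identification $g_\alpha=g(\alpha,\cdot)$ in the no-curtain case; once this is pinned down the rest is a standard univariate argument. It is worth remarking that neither Proposition~\ref{cor:pointcurtaininteraction} nor any lex-least invariance statement about $\res(f,g)$ at $\alpha$ needs to be invoked for this theorem, because $\alpha$ is a $0$-cell and any such invariance is vacuous there; the projection $P_{BM}^E$ contributes only by guaranteeing that $\alpha$ appears as a point cell of the induced decomposition of $\RR^{n-1}$ in the first place.
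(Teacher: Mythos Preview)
Your argument is correct and follows the same overall shape as the paper's proof: the same dichotomy (does $g$ have a curtain at $\alpha$ or not), the same trivial treatment of the curtain case, and the same endgame in the non-curtain case, namely that the zeros of $g$ on the fibre $\{\alpha\}\times\RR$ are among the roots of $f_\alpha g_\alpha$, so the intermediate value theorem gives sign-invariance on sectors.

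Where you differ is in how the dichotomy is reached and in what is made explicit. The paper first notes that $\res(f,g)(\alpha)=0$ (since $f$ has a curtain at $\alpha$) and invokes Proposition~\ref{cor:pointcurtaininteraction} to conclude that either $V_g$ has a curtain at $\alpha$ or $V_g$ meets $V_f$ in finitely many points above $\alpha$. You bypass this entirely: the dichotomy ``$g$ has a curtain at $\alpha$'' versus ``$g$ does not'' is tautological, and you then prove directly that the absence of a curtain forces $\seminu_\alpha(g)=0$, hence $g_\alpha(x_n)=g(\alpha,x_n)$. The paper's proof relies on this same identification implicitly (when it passes from ``roots of $g_\alpha$ are in $Z$'' to ``$g$ is non-zero on the open intervals''), but never spells it out. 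Your route is therefore a genuine, if modest, simplification: it removes the dependence on Proposition~\ref{cor:pointcurtaininteraction} and on the resultant, and it isolates the one non-trivial step (the semi-valuation computation showing $g_\alpha=g(\alpha,\cdot)$) that actually carries the argument. Your closing remark, that $\alpha$ being a $0$-cell makes any lex-least invariance of $\res(f,g)$ vacuous and that the projection matters only insofar as it produces $\alpha$ as a point cell, is accurate and clarifies why the shortcut is available.
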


\begin{proof}
	By assumption $V_f$ has a point curtain at $\alpha$. This implies that there exists a neighbourhood $B$ of $\alpha$ in $\RR^{n-1}$ such that  $\nu_\beta^\prime (f) =0^{n-1}$ for $\beta \in B\setminus\{\alpha\}$ and $\nu_\alpha^\prime (f)\neq 0^{n-1}$. Because we are using the projection operator $P_{BM}(E)$, where $E$ is the set of irreducible factors of $f$, the polynomial $f$ is  \lexl invariant with respect to the CAD produced by the projection operator $P_{BM}(E)$. Since the polynomial $f$ is \lexl invariant with respect to the CAD and $V_f$ has a curtain at $\alpha$,  
	the CAD consists of cells of the form $\alpha\times ( a,b)$ or point cells $\alpha\times\{c\}$, where $a,b,c\in\RR$.

	We know $\res(f,g)(\alpha)=0$. Then by Proposition \ref{cor:pointcurtaininteraction}, $V_g$ has a point curtain at $\alpha$ or intersects $V_f$ at finitely many points. If $V_g$ has a curtain at $\alpha$ then $g$ is sign invariant (zero) in all cells over $\alpha$. Suppose $V_f$ only intersects $V_g$ at finitely many points and let $Z=\{\beta_1,\ldots,\beta_k\}$ (arranged in increasing order) be the roots of $f_\alpha g_\alpha$. Then the cells above $\alpha$ will be of the form $\alpha\times (a,b)$ or point cells $\alpha\times\{c\}$, where $a,b,c\in Z$. Note that $\beta_i$ has to be either a root of $f_\alpha$ or $g_\alpha$ for each $i$. Since $g=0$ intersects $f=0$ at only finitely many points and all roots of $g_\alpha$ are in $Z$, $g$ is non-zero in $\alpha\times(\beta_i,\beta_{i+1})$. $g$ is zero on the point cells $(\alpha,\beta_i)$ where $\beta_i$ is a root of $g$, and non-zero otherwise.
\end{proof}

Theorem \ref{thm:point_curtain_exempt} is the first point at which the merger of Lazard-style CAD and equational constraint theory is not straight-forward.

\subsection{Non-point Curtains}\label{sec:NPC}
Even if there are non-point curtains present, we know from Theorems \ref{thm:FirstEC_BM} and \ref{thm:point_curtain_exempt} that use of $P_{BM}^E$ at the first (projection) layer and $P_{BM}$ at subsequent layers generates a sign-invariant CAD away from the non-point curtains. So the challenge is purely confined to the non-point curtains (if any). Example \ref{ex:nonpointcurtain} shows that we need to consider the resultants of polynomials other than equational constraints, as $P_{BM}^E$ has lost too much information compared with $P_{BM}$. Hence our strategy is the following.
\begin{enumerate}
\item\label{stepOne} Project with $P_{BM}^E$ at the first layer and $P_{BM}$ at subsequent layers, and lift to $\RR^{n-1}$ to get decompositions $D_1,\ldots,D_{n-1}$, and to $D_n$, where the other constraints are sign-invariant over the non-curtain and point-curtain cells where $f=0$.
\item If there are no non-point curtains we are done. Otherwise let $(\alpha_1^{(i)},\ldots,\alpha_{n-1}^{(i)})$ be the sample points of the cells $C_i\in D_{n-1}$ forming the non-point curtains. We know the chosen equational constraints is zero on these cells.
\item Project all other constraints to $\RR^1$ using $P_{BM}$. Let $P_i$ be the polynomial part of the projection at level $\RR^i$.
\item\label{StepNP} In $\RR^1$, consider the cells whose sample points are $\alpha_1^{(i)}$, and decompose them further according to $P_1$, refining $D_1$ to $\widehat D_1$, introducing the necessary sample points.
\item In $\RR^2$, consider those cells whose sample points are $(\alpha_1^{(i)},\alpha_2^{(i)})$.  Their base in $\RR^1$ has been split in the previous stage, but they may also need to be split further in the $x_2$ direction by the polynomials in $P_2$. Appropriate new sample points will need to be computed.
\item\label{StepLast}Continue to $\RR^{n-1}$ and then lift these cells to $\RR^n$.
  \item\label{StepNB} We have subdivided some cells in each $D_i$ to get $\widehat D_i$. If we want a completely cylindrical decomposition, these subdivisions may need to be extended to the whole of each $\widehat D_i$. 
\end{enumerate}

Step \eqref{StepNB} is purely a data structure manipulation: no new polynomial computations need to be done beyond the computation of sample points for the newly-split cells.

 To decompose a curtain of an equational constraint is the same as decomposing a set $S\times \RR$ where $S\subset\RR^{n-1}$. When working in a curtain, the equational constraint does not give us any information on how the rest of the polynomial constraints interact. 
The following theorem 
validates our method of decomposing curtains separately. 

\def\foo{\cite[Theorem 21]{Nair2021b}}
\begin{theorem}[\foo]\label{thm:NonPoint}
	Let $A=\{f,g_1,\ldots g_m\}$ be a set of polynomial constraints  where $f$ is the equational constraint. Let $D$ be the CAD obtained by the applying the steps \eqref{stepOne}--\eqref{StepNB}. Then $g_1,\ldots, g_m$ are  sign-invariant in the cells of $D$ where $f=0$. 
\end{theorem}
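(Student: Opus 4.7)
The plan is to split the set of cells of $D$ on which $f=0$ into two classes according to whether they sit above a non-point curtain of $f$ or not, and to handle each class by a separate argument, drawing respectively on Theorems~\ref{thm:FirstEC_BM}/\ref{thm:point_curtain_exempt} and on a Brown--McCallum-style argument for the curtain cylinders.

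For the off-curtain case, note that Step~\eqref{stepOne} applies $P_{BM}^E$ at the top level and $P_{BM}$ below, and the $\RR^{n-1}$-cells that are not feet of non-point curtains of $f$ are precisely either non-curtain cells or point-curtain cells of the CAD produced by that projection. On a non-curtain cell $C\subset\RR^{n-1}$, Theorem~\ref{thm:FirstEC_BM} gives sign-invariance of every $g_i$ in each Lazard section of $f$ over $C$, hence in each cell of $D$ on which $f=0$ lying over $C$. On a point-curtain cell, the same conclusion is delivered by Theorem~\ref{thm:point_curtain_exempt}, applied to each pair $(f,g_i)$ separately. Neither step requires any new work: these cells of $D$ are left untouched by Steps~\eqref{StepNP}--\eqref{StepNB}.

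For the non-point-curtain case, fix one non-point curtain cell $C_i\subset\RR^{n-1}$ with sample point $(\alpha_1^{(i)},\ldots,\alpha_{n-1}^{(i)})$. Here $f$ vanishes identically on the full cylinder $C_i\times\RR$, so the set $\{f=0\}$ inside this cylinder is the whole cylinder, and the statement to prove reduces to: the cells of $\widehat D_n$ inside $C_i\times\RR$ form a sign-invariant decomposition of that cylinder for $g_1,\dots,g_m$. The key observation is that the refinement carried out in Steps~\eqref{StepNP}--\eqref{StepLast} is exactly a Brown--McCallum projection-lifting computation applied to the polynomial set $\{g_1,\dots,g_m\}$, restricted to the tower of cells whose bases are the projections of $C_i$. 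Since the $P_j$ used in these steps are the $P_{BM}$-projections of $\{g_1,\dots,g_m\}$, the refined cells in $\widehat D_j$ above $\alpha_1^{(i)},\dots,\alpha_j^{(i)}$ are \lexl-invariant for all polynomials in $P_j$, so by the correctness of Brown--McCallum CAD (Theorem~\ref{thm:projectiondelineable} applied inductively, together with Hypothesis~\ref{Hyp-BMcC} to handle any nullifications of the $g_i$) each $g_i$ is \lexl-invariant, and in particular sign-invariant, in every cell of $\widehat D_n$ lying inside $C_i\times\RR$.

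The main obstacle is the inductive Brown--McCallum step inside the curtain cylinder: one has to verify that the local refinement performed cell-by-cell truly meets the hypotheses of Theorem~\ref{thm:projectiondelineable} at every level. Specifically, one must check that the discriminants and leading coefficients of each $g_i$, as well as their pairwise resultants, are \lexl-invariant on the refined bases we introduce, and that any further nullifications of a $g_i$ on a sub-cell of $C_i$ are detected by $\Test$ and resolved via the points part of $\ProjBM$. This is precisely what the iterated projection $P_1,\dots,P_{n-1}$ guarantees, so the argument goes through; the bookkeeping in Step~\eqref{StepNB} is cosmetic and introduces no new polynomial conditions, hence cannot disturb sign-invariance.
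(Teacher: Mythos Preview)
Your proposal is correct and follows essentially the same two-case split as the paper: off non-point curtains you invoke Theorems~\ref{thm:FirstEC_BM} and~\ref{thm:point_curtain_exempt}, and on non-point curtains you reduce to the correctness of the Brown--McCallum construction for $\{g_1,\dots,g_m\}$ alone. If anything, your treatment of the curtain case is more precise than the paper's, which cites Theorem~\ref{thm:signinvariant} at that point even though its no-curtain hypothesis is not available; the intended justification is really the Brown--McCallum correctness you appeal to via Theorem~\ref{thm:projectiondelineable} and \cite{BrownMcCallum2020a}.
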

\begin{proof}
	Let $S$ be a connected subset of $\RR^{n-1}$ such that $f$ is Lazard delineable over $S$ and the set $\{\res_{x_n}(f,g_j) | 1\le j\le m\}$ 
	 is  \lexl invariant over $S$. If $V_f$ does not contain a non-point curtain on $S$, then Theorems~\ref{thm:FirstEC_BM} and \ref{thm:point_curtain_exempt} imply the result. Let us assume the contrary. Since $V_f$ contains a non-point curtain on $S$, the aforementioned steps splits up $S$ into connected, non-intersecting subsets, where $f$ is Lazard delineable and $P_{BM}(\{g_1,\ldots,g_m\},\Gamma)$ is  \lexl invariant \cite{BrownMcCallum2020a}. Hence by Theorem \ref{thm:signinvariant}, $g_1,\ldots,g_m$ are  \lexl invariant in the cells of the curtains of $V_f$, hence also sign-invariant. 
\end{proof}

\section{Multiple Equational Constraints}

We now generalise the previous ideas to include multiple equational constraints, which involves modifying several levels of projection.

\subsection{Higher-dimensional Curtains}
\ifJSC
We now consider curtains with fibre dimension greater than 1.
\fi

\def\foo{\textsc{cf\cite[Definition 43]{Nair2021b}}}
\begin{definition}[\foo]\label{def:curtain}
	A variety $C\subseteq\RR^n$ is called an $m$-curtain if, whenever
	$(x_1,\ldots,x_n)\in C$, then  $(x_1,\ldots,x_{n-m},\allowbreak y_{n-m+1},\ldots,y_n)\in C$ for all
	$(y_{n-m+1},\ldots,y_n)\in\RR^m$.
\end{definition}

In other words, $C$ is an $m$-curtain if it is a union of fibres of $\RR^n \to \RR^{n-m}$.
\def\foo{\textsc{cf\cite[Definition 44]{Nair2021b}}}
\begin{definition}[\foo]\label{def:hascurtain}
	Suppose $S\subseteq\RR^{n-m}$ and $f\in\RR[x_1,\ldots,x_n]$. We say
	that $V_f$ (or $f$) has a $m$-curtain at $S$ if for all $(\alpha_1,\ldots,\alpha_{n-m})\in S$ and
	for all $(y_{n-m+1},\ldots,y_n)\in\RR^m$ we have $f(\alpha_1,\ldots,\alpha_{n-m},y_{n-m+1},\ldots,y_n)=0$. We call $S$ the foot of the curtain.
\end{definition}

\subsection{The Modified Projection}
For multiple equational constraints, we modify our projection operator further, inspired by \cite{McCallum2001}. We now assume that we are given a quantifier-free formula of the form

\begin{equation*}
	(f_1=0)\wedge\ldots\wedge(f_k=0)\wedge \Phi,
\end{equation*}
where $\Phi$ contains polynomial constraints $g_j$.

In this formula, the equational constraints are given by the polynomials $f_1,\ldots,f_k$. Since a solution set to this Tarski formula must satisfy the equational constraints, we focus our attention on the intersection of equational constraints. 
\ifJSC
First, let us define the projection operator.
\fi
\begin{definition}\label{def:MultipleEC_PL}
	Let $A$ be a finite set of irreducible polynomials in $\RR[x_1,\ldots,x_n]$ with $n\geq 2$ and let $E$ be a subset of $A$. The modified Brown--McCallum projection operator for multiple equational constraints $\hat{P}_{BM}^E(A)$ is the subset of $\RR[x_1,\ldots,x_{n-1}]$ consisting of the following polynomials. 
	\begin{enumerate}
		\item All leading coefficients of the elements of $A$.
		\item The trailing coefficients of the elements of $A$ required by test $\Test$.
		\item All discriminants of the elements of $A$.\label{item:disc}
		\item All resultants $\res_{x_n}(f,g)$, where $f\in E$ and $g\in A$ and $f\neq g$.  
	\end{enumerate}
\end{definition}
This differs from Definition \ref{def:modifiedBM} precisely as \cite{McCallum2001} differs from \cite{McCallum1999a} --- by adding item (\ref{item:disc}).
As before, we shall choose equational constraints and set $E$ as the set of their irreducible factors. 

The way we take advantage of multiple equational constraints is by using this operator recursively. The resultants between the chosen equational constraint and the remaining equational constraints are equational constraints of the projected polynomials. Hence we take the equational constraints one at a time, so if we have $k$ equational constraints we will use $\hat{P}_{BM}^E(A)$ for the first $k$ projections: see \cite{Englandetal2019a} for more on multiple equational constraints. The following theorems validate the use of $\hat{P}_{BM}^E(A)$ recursively.

\def\foo{\cite[Theorem 22]{Nair2021b}}
\begin{theorem}[cf \foo]\label{thm:ordertolexl}
	Suppose $f,\,g\in \RR[x_1,\allowbreak\ldots,x_n]$ both have positive
	degree in $x_n$. Let $S\subset\RR^{n-1}$ be a
	connected subset such that $f$ does not have a curtain over $S$. If
	$\disc_{x_n}(g)$, $\ldcf_{x_n}(g)$ and $\res_{x_n}(f,g)$ are all \lexl invariant
	over $S$, then $g$ is \lexl invariant on every section and sector of $f$ over
	$S$.
\end{theorem}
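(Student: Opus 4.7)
The plan is to combine two earlier results: Theorem \ref{thm:projectiondelineable} applied to $g$ (to obtain Lazard delineability of $g$, hence \lexl-invariance on its own delineation cells) and Theorem \ref{thm:signinvariant} applied to the pair $(f,g)$ (to obtain sign-invariance of $g$ on the Lazard sections of $f$). The conclusion is then reached by upgrading sign-invariance to \lexl-invariance cell by cell.

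First I would apply Theorem \ref{thm:projectiondelineable} to $g$: the hypotheses that $\disc_{x_n}(g)$ and $\ldcf_{x_n}(g)$ are \lexl-invariant on $S$ are immediate from the statement. The auxiliary requirement that $g$ vanish nowhere identically on $S$ follows from the \lexl-invariance of $\ldcf_{x_n}(g)$: if $g$ nullified at some $\alpha\in S$ then $\ldcf_{x_n}(g)(\alpha)=0$, and invariance together with the connectedness of $S$ would force $\ldcf_{x_n}(g)\equiv 0$ on $S$, contradicting $g$ having positive degree $d$ in $x_n$. Hence $g$ is Lazard delineable on $S$, and so \lexl-invariant on each Lazard section and sector of $g$ over $S$. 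Next I would apply Theorem \ref{thm:signinvariant}: the no-curtain assumption on $V_f$ over $S$ is explicit, the \lexl-invariance of $\res_{x_n}(f,g)$ is hypothesised, and Lazard delineability of $f$ is available from the surrounding CAD projection--lifting chain (under which the statement is invoked). This yields sign-invariance of $g$ on every Lazard section of $f$ over $S$.

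The final step upgrades sign-invariance to \lexl-invariance. For a section $\sigma$ of $f$ over $S$, sign-invariance of $g$ on $\sigma$ forces one of two alternatives: either $g$ is nowhere zero on $\sigma$, in which case $\nu(g)\equiv 0$ and \lexl-invariance is trivial; or $g$ vanishes throughout $\sigma$, in which case $\sigma$ must coincide with a Lazard section of $g$ over $S$ and the delineability of $g$ from the first step supplies the \lexl-invariance. For a sector $\tau$ of $f$ over $S$, the Lazard delineability of $g$ implies that the roots of $g$ inside $\tau$ arrange themselves as continuous sections with constant multiplicities, and the \lexl-invariance of $\res_{x_n}(f,g)$ (being either everywhere zero or everywhere non-zero on $S$) ensures that this structure is uniform across $\tau$; together with Proposition \ref{prop:valuation} this gives constancy of $\nu(g)$.

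The main obstacle is precisely this last point: controlling the interaction between the sections of $g$ and the sectors of $f$, so that the \lexl valuation of $g$ is genuinely constant on each sector of $f$ rather than only on a further refinement. The key lever is the \lexl-invariance of $\res_{x_n}(f,g)$, which dichotomises the relative position of the real-root graphs of $f$ and $g$ over the connected set $S$; this dichotomy, combined with the no-curtain assumption on $f$ (which rules out the pathological case where $V_f$ and $V_g$ meet along a positive-dimensional fibre, cf.\ Proposition \ref{cor:pointcurtaininteraction}), delivers the required uniformity and hence completes the argument.
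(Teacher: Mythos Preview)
Your approach mirrors the paper's: invoke Theorem~\ref{thm:signinvariant} for sign-invariance of $g$ on the sections of $f$, invoke Theorem~\ref{thm:projectiondelineable} (applied to $g$) for Lazard delineability of $g$ over $S$, and combine; the paper compresses your entire ``upgrade'' discussion into a bare ``Hence'', so you are more explicit but not substantively different.

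One small slip worth flagging: your justification that $g$ cannot nullify on $S$ does not go through as written. From $\ldcf_{x_n}(g)(\alpha)=0$ and \lexl-invariance you correctly deduce $\ldcf_{x_n}(g)\equiv 0$ on $S$, but this does \emph{not} contradict $\deg_{x_n} g>0$: the leading coefficient is a nonzero polynomial in $x_1,\ldots,x_{n-1}$, yet $S$ may perfectly well lie inside its zero locus. The paper's proof does not verify this hypothesis of Theorem~\ref{thm:projectiondelineable} at all, so you are not worse off than the original; but be aware that the deduction you wrote is not the reason the hypothesis holds.
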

\begin{proof}
	From Theorem \ref{thm:signinvariant} we know that $g$ is sign invariant in the sections of $f$. Since $\disc_{x_n}(g)$ and $\ldcf_{x_n}(g)$ are \lexl invariant over $S$, $g$ is Lazard delineable over $S$ from Theorem \ref{thm:projectiondelineable}. Hence $g$ is \lexl invariant in the sections and sectors of~$f$. 
\end{proof}

\subsection{Correctness off Curtains}
\def\foo{\cite[Theorem 27]{Nair2021b}}
\begin{theorem}[cf \foo]\label{thm:manyEC}
	Let $A$ be a set of irreducible polynomials in $x_1,\ldots,x_n$ of positive degree in $x_n$, where $n\ge2$. Let $E$ be a subset of $A$ and let $\Gamma$ be a finite set of points in $\RR^n$. Let $S$ be a connected subset of $\RR^{n-1}$. Suppose that each element of $\hat{P}_{BM}^E(A,\Gamma)$ is \lexl invariant in $S$. Then each element of $E$ is Lazard delineable over $S$ and exactly one of the following is true.
\begin{enumerate}
\item The hypersurface defined by the product of the elements of $E$ has a curtain whose foot is $S$. 
\item The Lazard sections of the elements of $E$ are pairwise disjoint over $S$. Each element of $A$ is \lexl invariant in such Lazard sections.
\end{enumerate}
\end{theorem}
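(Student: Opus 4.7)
The plan is to mirror the structure of the proof of Theorem~\ref{thm:FirstEC_BM}, but use the extra information provided by item~\eqref{item:disc} in Definition~\ref{def:MultipleEC_PL} (all discriminants of elements of $A$, not merely of $E$) to upgrade the conclusion from sign-invariance to full \lexl-invariance. First I would set $f=\prod_{e\in E}e$ and examine the factorization
\[
\disc_{x_n}(f)=\Big(\prod_{e\in E}\disc_{x_n}(e)\Big)\Big(\prod_{e\neq e'\in E}\res_{x_n}(e,e')\Big),
\]
exactly as in~\eqref{eqn:Lazard_main_proof}. Every factor on the right is in $\hat{P}_{BM}^E(A,\Gamma)$ (discriminants of elements of $E$ via item~\eqref{item:disc}, pairwise resultants within $E$ via item~(4)), so $\disc_{x_n}(f)$ and $\ldcf_{x_n}(f)$ are \lexl-invariant on $S$. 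Theorem~\ref{thm:projectiondelineable} then forces one of the two alternatives: either $f$ vanishes identically on a point of $S$ (which, as $S$ is connected and the leading coefficient is \lexl-invariant, must extend to a curtain over the whole of $S$, giving conclusion~(1)), or $f$ is Lazard delineable on~$S$.

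Assuming we are in the non-curtain case, I would next recover the individual Lazard delineability of each $e\in E$ and the disjointness of their sections. Each individual $\disc_{x_n}(e)$ lies in $\hat{P}_{BM}^E$ by item~\eqref{item:disc}, and $\ldcf_{x_n}(e)$ by item~(1), so a second application of Theorem~\ref{thm:projectiondelineable} shows $e$ is Lazard delineable on $S$. For distinct $e,e'\in E$ the resultant $\res_{x_n}(e,e')$ is in $\hat{P}_{BM}^E$ (item~(4), since both are in $E$) and is \lexl-invariant on $S$, so the Lazard sections of $e$ and $e'$ over $S$ either coincide or are disjoint; but if they coincided, the common zero would contribute a factor of multiplicity $\ge 2$ to $f$, contradicting the just-proved Lazard delineability of $f$. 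Hence the sections are pairwise disjoint.

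Finally, to obtain \lexl-invariance of an arbitrary $g\in A$ in every Lazard section $\sigma$ of every $e\in E$ over $S$, I would apply Theorem~\ref{thm:ordertolexl}. Its hypotheses demand that $e$ has no curtain over $S$ (true, being in the non-curtain case), and that $\disc_{x_n}(g)$, $\ldcf_{x_n}(g)$ and $\res_{x_n}(e,g)$ all be \lexl-invariant on $S$: the first two are in $\hat{P}_{BM}^E$ by items~\eqref{item:disc} and~(1) applied to $g$ itself (this is precisely what the modification over Definition~\ref{def:modifiedBM} buys us), and the resultant is in $\hat{P}_{BM}^E$ by item~(4) because $e\in E$. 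The conclusion of Theorem~\ref{thm:ordertolexl} then gives \lexl-invariance of $g$ on the section $\sigma$, completing conclusion~(2).

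The main obstacle is really conceptual rather than technical: one must be careful that the strengthening from $P_{BM}^E$ to $\hat{P}_{BM}^E$ is exactly what is required to invoke Theorem~\ref{thm:ordertolexl} rather than merely Theorem~\ref{thm:signinvariant}, and that the dichotomy ``curtain versus Lazard delineability'' of $f=\prod E$ propagates cleanly to each individual factor $e\in E$ without loss of information at coinciding sections. Once the bookkeeping between items~(1)--(4) of Definition~\ref{def:MultipleEC_PL} and the hypotheses of Theorems~\ref{thm:projectiondelineable} and~\ref{thm:ordertolexl} is laid out as above, the result follows directly.
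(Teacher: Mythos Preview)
Your approach is essentially the same as the paper's: establish the curtain/non-curtain dichotomy, use the containment $P_{BM}(E)\subset\hat P_{BM}^E(A,\Gamma)$ to obtain Lazard delineability and pairwise disjointness of the sections of $E$, and then invoke Theorem~\ref{thm:ordertolexl} (which is exactly what the extra discriminants in item~\eqref{item:disc} are for) to upgrade to \lexl-invariance of every $g\in A$. The paper's own proof is considerably terser, simply citing $\hat P_{BM}^E\supset P_{BM}(E)$ for the disjointness and then Theorem~\ref{thm:ordertolexl}, without spelling out the discriminant factorisation or the individual delineability of each $e\in E$ as you do.

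There is one genuine slip in your more detailed argument. You rule out coinciding sections of distinct $e,e'\in E$ by saying that a coincidence would force a root of multiplicity $\ge 2$ in $f=\prod_{e\in E}e$, ``contradicting the just-proved Lazard delineability of $f$''. But Lazard delineability explicitly permits constant multiplicities $m_i\ge 2$ (see the definition quoted in the paper), so a persistent double root is no contradiction at all. The correct reason the sections are pairwise disjoint is the one implicit in the paper (and in Theorem~\ref{thm:BM_main_thm}): the $\res_{x_n}(e,e')$ are \lexl-invariant on $S$, so over $S$ the sections of $e$ and $e'$ are either everywhere disjoint or everywhere coincident; the latter is excluded because $e,e'$ are distinct irreducibles (hence coprime), so $\res_{x_n}(e,e')$ is not the zero polynomial and, combined with the non-curtain hypothesis and \lexl-invariance of leading coefficients, cannot vanish identically on $S$ without forcing the curtain alternative. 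Replace your multiplicity remark with that observation and the argument goes through.
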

\begin{proof}
	If we are in case 1, then at least one element of $E$ is zero over the whole of the curtain, so case 2 is meaningless. Hence the two cases are disjoint.
\par
So assume that $S$ is not a foot of a curtain for any element of $E$. Then, since  $\hat{P}_{BM}^E(A,\Gamma)\supset P_{BM}(E)$, the Lazard sections of the elements of $E$ are pairwise disjoint over $S$.
Then by Theorem \ref{thm:ordertolexl} the elements of $A$ are \lexl invariant in such Lazard sections.  
\end{proof}

Hence, if we never encounter curtains, we have a \lexl invariant stack over $S$. Assuming we have (after computing resultants to propagate equational constraints) an equational constraint in each of $x_n,\ldots,x_{n-k+1}$, then $k$ applications of $\hat{P}_{BM}^E$ followed by $n-k-1$ applications of ${P}_{BM}$ and finally $n-1$ applications of Lazard lifting, will give us a \lexl invariant (for $A$) CAD  of $V_{f_1,\ldots,f_k}$ in the absence of curtains.
\subsection{Curtains}\label{sec:recurse}
But what if we encounter a curtain of an equational constraint (either an original equation constraint, or a resultant of equational constraints) in the lifting?  Here we have to take a leaf out of \S\ref{sec:NPC}, where we projected the whole of $A\setminus E$ over the curtain. There may, however, be several curtains, possibly at different levels of lifting, and these curtains may be of derived equational constraints.

Hence we suggest that, \emph{if} curtains are detected in the lifting phase, we follow the methodology of \S\ref{sec:NPC} but do a complete projection of $A$ to $\RR^1$ using $P_{BM}$, and then follow steps \eqref{StepNP}--\eqref{StepLast} to refine the decomposition over those stacks that have curtains in them.

\section{Examples}\label{sec:Examples}
More examples are given in \cite[\S7.2]{Tonks2021a} and \cite{Davenportetal2023b}. The example we consider here is the first example in  \cite[\S7.2]{Tonks2021a}, taken from \cite[\S2.2.3]{Dolzmannetal2004a}.
As originally posed, the problem is
\begin{equation}\label{eq:EX1Orig}
\exists u\,\exists v~-uv+x=0\land -uv^2+y=0\land -u^2+z=0.
\end{equation}
The order chosen by the heuristics described in \cite{Tonks2021a} is  $u\prec v\prec y \prec z \prec x$.
Because we have multiple equational constraints, we can use Gr\"obner bases, as suggested in \cite{Wilsonetal2012a,DavenportEngland2017a,Englandetal2019a}. This converts the Cylindrical Algebraic Decomposition problem into
\begin{equation}\label{eq:EX1GB}
\{vx-y,uv-x,\mathbf{ux-vz}\},
\end{equation}
where the emboldened equation is taken as the equational constraint.
This projects as follows:
\begin{equation}\label{eq:EX1Proj}
\{-v^2z+uy,\mathbf{u^2-z}\} ;
\{\mathbf{uv^2-y}\} ;
\{v\} ;
\{u\}.
\end{equation}
Having chosen $uz-vz$ as the initial equational constraint, the curtains are $u=0\land v=0$ and $u=0\land z=0$, but in fact they are decomposed into 15 cells (each into 9 but three are duplicates) as given in \cite[Table 7.1]{Tonks2021a}. As described in \S\ref{sec:recurse}, we have to lift $A\setminus E$ i.e. $\{vx-y,uv-x\}$ over these curtains.
\begin{equation}\label{eq:EX1ProjRest}
\{vx-y, uv-x\} ;
\emptyset, ;
\{y,uv^2-y\} ;
\{v\} ;
\{u\}
\end{equation}
The combined result is a CAD with 591 cells.
\par
If we merely adopt the ``Single Equational Constraint'' methodology, the Lazard projection (as on \cite[p. 238]{Tonks2021a}) is
\begin{equation}\label{eq:EX1Proj1EC}
\{-v^2z+uy,u^2-z\} ;
\{y,uv^2-y\} ;
\{v\} ;
\{u\}
\end{equation}
which gives a total of 951 cells, due to the extra $y$ polynomial. 
\par
This $y$ is a trailing coefficient which the Brown--McCallum projection does not need, so in this case the behaviour would be the same as multiple equation constraints (which is to be expected, as the savings after the first projection only manifest when there are more than two polynomials in all).

\section{Complexity Analysis}
\begin{definition}
	Let $\{f_i\}$ be a finite set of polynomials. The combined maximum degree 
	of $\{f_i\}$ is the maximum element of the 
	set
	\begin{equation*}
	\bigcup_j{\Big\{} \deg_{x_j}\prod_i  f_i {\Big\}}.
	\end{equation*}	
	
\end{definition}

\def\foo{\cite[Section 6.1]{McCallum1985b}}
\begin{definition}[\foo]\label{def:md_poly}
	A set of polynomials has the $(m,d)$-property if it can be partitioned into $m$ sets, such that the combined maximum degree of each set is less than or equal to d.
\end{definition}
\ifJSC
This definition is extended in \cite[Chapter 8]{Nair2021b} allow equational constraints.
\begin{definition}
		Let $A$ be a set of polynomial factors of a family of polynomial constraints. We say that $A$ has the $(m,d)_k$-property if $A$ can be written as the union  of 
		$m$ sets each of max combined degree $\leq d$ and each of the first $k$ sets consist of the factors of a single equational constraint. 
	\end{definition}
\fi

The outcome of the complexity analysis carried out in \cite[Chapter 8]{Nair2021b} (based on \cite{McCallum1985b}) is summarised in Table \ref{tbl:complexity_summary}. If the set of inputs has the $(m,d)$-property then the set of projected polynomials after projection has the $(M,2d^2)$-property, with $M$ as in Table \ref{tbl:complexity_summary}.

\renewcommand{\arraystretch}{1.5}
\begin{table}[ht]
	\vskip-9pt  
	\caption{Growth of polynomials in CAD} 
	\label{tbl:complexity_summary} 
	\makebox[.5\textwidth]{\begin{tabular}{|c | c | c | c | c|}
	\hline 
	Theory by && Original & Single EC & Multiple EC\\
	\hline 
	McCallum &\cite{McCallum1985b}& 	$\left\lfloor \frac{(m+1)^2}{2} \right\rfloor $ &  $\left\lfloor\frac{5m+4}{4}\right\rfloor$ & $\left\lfloor\frac{11m}{4}\right\rfloor$ \\
	\hline
	Lazard &\cite{McCallumetal2019a}& 	$\left\lfloor    \frac{(m+1)^2}{2}    \right\rfloor $ & $\left\lfloor    \frac{5m+3}{4}    \right\rfloor $ & $\left\lfloor    \frac{9m-1}{4}    \right\rfloor $ \\ 
	\hline
	Brown-McCallum &\cite{BrownMcCallum2020a}& $\frac{m(m+1)}{2}   $ & $\left\lfloor    \frac{5m+2}{4}    \right\rfloor  $ & $\left\lfloor    \frac{9m-2}{4}    \right\rfloor $ \\ 
	\hline
	\end{tabular}}
	\vskip-9pt  
\end{table}\renewcommand{\arraystretch}{1}

We note that there is very little improvement in the asymptotics as we move down the table: the only leading term to change is in the Multiple EC column, from 11 to 9. The differences in practice are more substantial: \cite[p 145]{BrownMcCallum2020a} shows that their method (``Improved Lazard'') only requires 27\% of the cells of the basic Lazard method.

This shows that, if we have only one equational constraint, the ``Single EC'' methods are better. However, if we have multiple equational constraints, the ``Multiple EC'' methods are better: for projections with two (or more) equational constraints, the number of polynomials is bounded by $O(m^4)$ for the original methods, $O(m^2)$ for the ``Single EC'' and $O(m)$ for the ``Multiple EC'' methods.

\section{Conclusion}
We have analysed both single and multiple equational constraints.  Table \ref{tbl:complexity_summary} shows that, if we only have one equation constraint, the ``single'' method is distinctly better at the cost of only producing a sign-invariant decomposition
, which is sufficient for all known applications.
\subsection{Single Equational Constraint}
Theorem \ref{thm:FirstEC_BM} tells us that we can transport the single equational work of \cite{McCallum1999a} (which can fail because of nullification) to the Brown--McCallum projection in the cases where  \cite{McCallum1999a} works. The complexity gain \cite{Englandetal2019a} is the same: the double exponent of the number of polynomials is reduced by 1. Theorem \ref{thm:point_curtain_exempt} extends this to point curtains, with the same complexity improvement.

Theorem \ref{thm:NonPoint} shows that we can work with an equational constraint even in the presence of non-point curtains, but the efficiency gain \emph{on the curtain} is more modest, and does not lend itself to a simple formulation.

\subsection{Multiple Equational Constraints}
Theorem \ref{thm:manyEC}  tells us that we can transport the single equational work of \cite{McCallum2001} (which fails with nullification) to the Brown--McCallum projection in the cases where  \cite{McCallum2001} worked.  The complexity gain \cite{Englandetal2019a} is the same: the double exponent of the number of polynomials is reduced by $k$, the number of consecutive layers where we can apply $\hat{P}_{BM}^E$.

In the presence of curtains, our method is probably more efficient than purely using \cite{BrownMcCallum2020a}, but this is hard to quantify. See \S\ref{sec:Examples}.
\subsection{Future Work}
There are several open questions.
\begin{enumerate}
  \item We have talked about ``the number of consecutive layers where we'' have an equational constraint. This is implicit in \cite{McCallum2001} as well. But it appears that, in the absence of curtains, it should be possible to ``mix and match'' between $PBM$ and $\hat{P}_{BM}^E$, using  $\hat{P}_{BM}^E$ at every layer where there is an equations constraint.  But formalising this would require a suitable induction hypothesis.
  \item \S\ref{sec:PointCurtains} deals more neatly with point curtains in the case of a single equational constraint. When this curtain occurs in the lift from $\RR^{n-1}$ to $\RR^n$ we can probably do the same for multiple equational constraints.  But what if the point curtain is elsewhere? We currently propose treating is as a non-point curtain --- can we do better?
    \item \cite{Englandetal2019a} suggested using Gr\"obner bases rather than resultants to propagate equational constraints, in the setting of \cite{McCallum2001}. That has been shown to work in \S\ref{sec:Examples}.
        \item
    \cite[see also \cite{Bradfordetal2016a}]{Bradfordetal2013b} introduced the concept of a ``truth table invariant CAD'', where equational constraints might only govern part of a formula, as in $((f_1=0)\land\Phi_1)\lor((f_2=0)\land\Phi_2)$. This theory build on \cite{McCallum1999a}, so should transfer.
    \end{enumerate}
In addition, \cite[\S9.1]{Nair2021b} asks whether \lexl is the only other valuation (besides sign in \cite{Collins1975} and order in \cite{McCallum1985b}) that we should consider. It is noted there that ``The property of upper-semicontinuity guarantees that if two polynomials are valuation invariant in a set,
then their product is valuation invariant in the same set and vice-versa, which helps
us take advantage of equational constraints''.
So what other upper-semicontinuous valuations might we consider?

\pagebreak
\bibliography{../../../../jhd.bib}
\end{document}